\newtheorem{proposition}{Proposition}
\newtheorem{definition}{Definition}
\newtheorem{theorem}{Theorem}
\newtheorem{pcorollary}{Corollary}[proposition]
\newcommand{\ket}[1]{\left\vert#1\right\rangle}
\newcommand{\bra}[1]{\left\langle#1\right\vert}
\newcommand{\abs}[1]{\left|#1\right|}
\newcommand{\ketbra}[2]{\vert #1 \rangle\langle #2 \vert}
\newcommand{\braket}[3]{\langle #1 | #2 |#3 \rangle}
\newcommand{\coh}[1]{\mathcal{C}\left( #1 \right)}
\newcommand{\cc}[1]{\mathcal{C}_{cc}\left( #1 \right)}
\newcommand{\cqc}[1]{\mathcal{C}_{cqc}\left( #1 \right)}
\newcommand{\ecqc}[1]{E_{cqc} \left( #1 \right)}
\def\bra#1{\langle #1|}
\def\ket#1{\left|#1 \right>}
\def\Tr{\mbox{Tr}}
\begin{document}
\title{A Unified View of Quantum Correlations and Quantum Coherence}
\author{Tan Kok Chuan Bobby, Hyukjoon Kwon, Chae-Yeun Park, and Hyunseok Jeong}
\affiliation{Center for Macroscopic Quantum Control, Department of Physics and Astronomy, Seoul National University, Seoul, 151-742, Korea}
\date{\today}

\begin{abstract}
In this paper, we argue that quantum coherence in a bipartite system can be contained either locally or in the correlations between the subsystems. The portion of quantum coherence contained within correlations can be viewed as a kind quantum correlation which we call correlated coherence.
We demonstrate that the framework provided by correlated coherence allows us retrieve the same concepts of quantum correlations as defined by the asymmetric and symmetrized versions of quantum discord as well as quantum entanglement, thus providing a unified view of these correlations. We also prove that correlated coherence can be formulated as an entanglement monotone, thus demonstrating that entanglement may be viewed as a specialized form of coherence.
\end{abstract}

\pacs{}
\maketitle

\section{Introduction}
Quantum mechanics admit a superposition between different physical states. A superposed quantum state is described by a pure state and is completely different in nature to a classical stochastic mixture of states, otherwise called mixed states. In the parlance of quantum mechanics, the former is usually referred to as a coherent superposition, the latter one as a incoherent classical mixture.

A particularly illuminating example of quantum coherence in action is the classic double slit experiment. In the quantum version, a single electron, passing through a double slit one at a time and upon emerging, forms an interference fringe despite not interacting with any other electron. An explanation of this phenomena requires a coherent superposition of two travelling waves emerging from both slits. Such an effect is impossible to explain using only incoherent classical mixtures. Following the birth of quantum theory, physical demonstrations of quantum coherence arising from superpositions of many different quantum systems such as electrons, photons, atoms, mechanical modes, and hybrid systems have been achieved \cite{Hornberger12, Wineland13, Aspelmeyer14}.

Recent developments in our understanding of quantum coherence have come from the burgeoning field of quantum information science. 
One important area of study that quantum information researchers concern themselves with is the understanding of quantum correlations. 
It turns out that in a multipartite setting, quantum mechanical effects allows remote laboratories to collaborate and perform tasks that would otherwise be impossible using classical physics \cite{NielsenChuang}. Historically, the most well studied quantum correlation is quantum entanglement \cite{EPR1935,Horodecki2001, Werner1989}. Subsequent developments of the idea lead to the formulation of quantum discord \cite{Ollivier2001, Henderson2001}, and its symmetrized version \cite{Oppenheim2002, Modi2010, Luo2008} as more generalized forms of quantum correlations that includes quantum entanglement. The development of such ideas of the quantumness of correlations has lead to a plethora of quantum protocols such as quantum cryptography \cite{Ekert91}, quantum teleportation \cite{Bennett1993}, quantum superdense coding \cite{Bennett2001}, quantum random access codes \cite{Chuan2013}, remote state preparation \cite{Dakic2012}, random number generation\cite{Pironio2010}, and quantum computing \cite{Raussendorf2001, Datta2008}, amongst others. Quantum correlations have also proven useful in the study of macroscopic quantum objects \cite{Jeong2015}.

Meanwhile, quantitative theories for entanglement \cite{Plenio07, Vedral98} have been formulated by characterizing and quantifying entanglement as a resource to achieve certain tasks that are otherwise impossible classically.
Building upon this, Baumgratz et al. \cite{Baumgratz14} recently proposed a resource theory of quantum coherence.
Recent developments has since unveiled interesting connections between quantum coherence and correlation, such as their interconversion with each other \cite{Ma15, Streltsov15} and trade-off relations \cite{Xi15}.

In this paper, we demonstrate that quantum correlation can be understood in terms of the coherence contained solely between subsystems.
In contrast to previous studies which established indirect relationships between quantum correlation and coherence \cite{Ma15, Streltsov15, Xi15},
our study establishes a more direct connection between the two and provides a unified view of quantum correlations which includes quantum discord and entanglement using the framework of quantum coherence.

\section{Preliminaries}
\subsection{Bipartite system and local basis}
In this paper, we will frequently refer to a bipartite state which we denote $\rho_{AB}$, where $A$ and $B$ refer to local subsystems held by different laboratories. Following convention, we say the subsystems $A$ and $B$ are held by Alice and Bob respectively.
The local state of Alice is obtained by performing a partial trace on $\rho_{AB}$, and is denoted by $\rho_A = \mathrm{Tr}_B(\rho_{AB})$, and $\{ \ket{i}_A \}$ is a complete local basis of Alice's system.
Bob's local state and local basis are also similarly defined.
In general, the systems Alice and Bob holds may be composite, such that $A=A_1 A_2 \cdots A_N$ and $B = B_1 B_2 \cdots B_M$ so the total state may identically be denoted by $\rho_{A_1A_2\cdots A_N B_1 B_2 \cdots B_M}$.


\subsection{Quantum coherence}
We will adopt the axiomatic approach for coherene measures as shown in Ref.~\cite{Baumgratz14}.
For a fixed basis set $\{ \ket{i} \}$, the set of incoherent states $\cal I$ is the set of quantum states with diagonal density matrices with respect to this basis. Then a reasonable measure of quantum coherence $\mathcal{C}$ should  satisfy following properties:
(C1) $C(\rho) \geq 0$ for any quantum states $\rho$ and equality holds if and only if $\rho \in \cal I$.
(C2a) Non-increasing under incoherent completely positive and trace preserving maps (ICPTP) $\Phi$ , i.e., $C(\rho) \geq C(\Phi(\rho))$.
(C2b) Monotonicity for average coherence under selective outcomes of ICPTP:
$C(\rho) \geq \sum_n p_n C(\rho_n)$, where $\rho_n = \hat{K}_n \rho \hat{K}_n^\dagger/p_n$ and $p_n = \Tr [\hat{K}_n \rho \hat{K}^\dagger_n ]$ for all $\hat{K}_n$ with $\sum_n \hat{K}_n \hat{K}^\dagger_n = \mathbb 1$ and $\hat{K}_n {\cal I} \hat{K}_n^\dagger \subseteq \cal I$.
(C3) Convexity, i.e., $\lambda C(\rho) + (1-\lambda) C(\sigma) \geq C(\lambda \rho + (1-\lambda) \sigma)$, for any density matrix $\rho$ and $\sigma$ with $0\leq \lambda \leq 1$.
In this paper, we will employ the $l_1$-norm of coherence, which is defined by $\coh{\rho} \coloneqq \sum _{i\neq j} \abs{ \bra{i} \rho \ket{j}}$, for any given basis set $\{ \ket{i} \}$ (otherwise called the reference basis). It can be shown that this definition satisfies all the properties mentioned \cite{Baumgratz14}. 

\subsection{Local operations and classical communication (LOCC)}
In addition, we will also reference local operations and classical communication (LOCC) protocols in the context of the resource theory of entanglement. LOCC protocols allow for two different types of operation. First, Alice and Bob are allowed to perform quantum operations, but only locally on their respective subsystems. Second, they are also allowed classical, but otherwise unrestricted communication between them. LOCC operations are especially important in the characterization of quantum entanglement, which typically does not increase under such operations. Measures of entanglement satisfying this are referred to as LOCC monotones \cite{Vidal00}.

\section{Maximal Coherence Loss}

Before establishing the connection between quantum correlation and coherence, we first consider the measurement that leads to the maximal coherence lost in the system of interest. For a monopartite system, the solution to this is trivial. For any quantum state $\rho = \sum_{i,j} \rho_{i,j} \ketbra{i}{j}$ with a reference basis $\{\ket{i}\}$, it is clear that the measurement that maximally removes coherence from the system is the projective measurement $\Pi(\rho) = \sum_{i} \ketbra{i}{i} \rho \ketbra{i}{i}$. This measurement leaves behind only the diagonal terms of $\rho$, so $\coh{\Pi(\rho )} = 0$, which is the minimum coherence any state can have.

A less obvious result for a bipartite state is the following:

\begin{proposition} \label{maxCoh}
For any bipartite state $\rho_{AB} = \sum_{i,j,k,l}\rho_{i,j,k,l} \ket{i,j}_{AB}\bra{k,l}$ where the coherence is measured with respect to the local reference bases $\{\ket{i}_A\}$ and $\{\ket{j}_B\}$, the projective measurement on subsystem $B$ that induces maximal coherence loss is $\Pi_B(\rho_{AB}) = \sum_{j} \left( \openone_A \otimes  \ket{j}_B\bra{j} \right) \rho_{AB}  \left( \openone_A \otimes  \ket{j}_B\bra{j} \right)$.
\end{proposition}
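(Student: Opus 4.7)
The plan is to establish the proposition by showing that, for any alternative projective measurement $\tilde{\Pi}_B$ on $B$ associated with an orthonormal basis $\{\ket{\tilde{\alpha}}_B\}$, the post-measurement $l_1$-coherence satisfies $\coh{\tilde{\Pi}_B(\rho_{AB})} \geq \coh{\Pi_B(\rho_{AB})}$. Because $\coh{\rho_{AB}}$ itself is fixed, this would imply that the reference-basis measurement $\Pi_B$ induces the maximal coherence loss $\coh{\rho_{AB}} - \coh{\tilde{\Pi}_B(\rho_{AB})}$.

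First I would write the post-measurement state in classical-quantum form on $B$ as $\tilde{\Pi}_B(\rho_{AB}) = \sum_\alpha \sigma^{(\alpha)}_A \otimes \ket{\tilde{\alpha}}_B\bra{\tilde{\alpha}}$, with $\sigma^{(\alpha)}_A = \bra{\tilde{\alpha}}_B \rho_{AB} \ket{\tilde{\alpha}}_B$ a positive sub-normalized operator on $A$. Introducing the unitary $V_{m\alpha} = \bracket{m}{\tilde{\alpha}}$ relating the reference and measurement bases on $B$, the matrix elements in the product reference basis are
\[
\bra{i,m}\tilde{\Pi}_B(\rho_{AB})\ket{k,n} = \sum_\alpha V_{m\alpha}\, V_{n\alpha}^{*}\, [\sigma^{(\alpha)}_A]_{ik}.
\]
I would then split the $l_1$-coherence sum of $\tilde{\Pi}_B(\rho_{AB})$ into a ``$B$-diagonal'' part (pairs with $m=n$ and $i \neq k$) and a ``$B$-off-diagonal'' part (pairs with $m \neq n$). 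The $B$-off-diagonal part is a sum of non-negative terms; dropping it gives the bound
\[
\coh{\tilde{\Pi}_B(\rho_{AB})} \;\geq\; \sum_{m,\, i\neq k} \left| \sum_\alpha |V_{m\alpha}|^{2}\, [\sigma^{(\alpha)}_A]_{ik} \right|.
\]

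The remaining task is to compare this lower bound to $\coh{\Pi_B(\rho_{AB})} = \sum_{m,\, i\neq k} |[\sigma^{(m)}_A]_{ik}|$, where $\sigma^{(m)}_A = \bra{m}_B \rho_{AB} \ket{m}_B$. I would exploit two structural facts: that $\{|V_{m\alpha}|^{2}\}$ is a doubly stochastic matrix by unitarity of $V$, and that the global positivity of $\rho_{AB}$ as a block operator enforces Cauchy--Schwarz-type inequalities $|[\sigma^{(\alpha)}_A]_{ik}|^{2} \leq [\sigma^{(\alpha)}_A]_{ii}\,[\sigma^{(\alpha)}_A]_{kk}$ on each conditional block. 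The hard part will be precisely this final comparison: a naive triangle inequality inside the modulus runs the wrong way, so the argument must use the positivity constraints to control the relative phases of the entries $[\sigma^{(\alpha)}_A]_{ik}$ across different outcomes $\alpha$ and thereby rule out the destructive cancellations that would otherwise drive the doubly stochastic average below the reference-basis expression. Once this comparison is carried out, the proposition follows by combining it with the lower bound already obtained.
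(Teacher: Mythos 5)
Your opening reduction is sound: writing $\tilde{\Pi}_B(\rho_{AB})=\sum_\alpha \sigma^{(\alpha)}_A\otimes\ketbra{\tilde\alpha}{\tilde\alpha}$ and discarding the $B$-off-diagonal ($m\neq n$) terms is correct and matches the paper's first inequality. But the final comparison you defer as ``the hard part'' is not merely hard --- it is false, and with it the proposition itself. Take $\rho_{AB}=\frac{1}{2}\ketbra{+}{+}_A\otimes\ketbra{0}{0}_B+\frac{1}{2}\ketbra{-}{-}_A\otimes\ketbra{1}{1}_B$ with $\ket{\pm}=(\ket{0}\pm\ket{1})/\sqrt{2}$ and computational reference bases on both sides. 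The reference-basis measurement $\Pi_B$ leaves this state invariant, so $\coh{\Pi_B(\rho_{AB})}=\coh{\rho_{AB}}=1$ and the coherence loss is zero; measuring $B$ in the $\{\ket{\pm}\}$ basis instead gives conditional blocks $\sigma^{(\pm)}_A=\frac{1}{4}\openone_A$, hence the post-measurement state $\frac{1}{4}\openone_A\otimes\openone_B$ with zero coherence and maximal loss. The destructive cancellation you hoped to exclude via double stochasticity and positivity (here $\frac{1}{2}\ketbra{+}{+}+\frac{1}{2}\ketbra{-}{-}=\frac{1}{2}\openone$ inside each conditional block) actually occurs, so no phase-control argument can close your gap. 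The most your correct first step can deliver is a measurement-independent floor: pushing the $m$-sum inside the modulus and using $\sum_m\abs{V_{m\alpha}}^2=1$ together with $\sum_\alpha\sigma^{(\alpha)}_A=\rho_A$ gives $\coh{\tilde\Pi_B(\rho_{AB})}\geq\sum_{i\neq k}\abs{(\rho_A)_{ik}}=\coh{\rho_A}$, which the reference basis does not attain in general (in the example: $1$ versus the floor $0$).

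It is worth knowing that the paper's own proof founders at exactly this point, and along the same route you sketched. It drops the $j\neq l$ terms, then moves the $j$-sum inside the absolute value to reach the bound $\sum_{i,k}\abs{\sum_n p_n\sum_p\psi^n_{i,p}(\psi^n_{k,p})^*}=\sum_{i,k}\abs{(\rho_A)_{ik}}$, and then asserts this ``is exactly the absolute sum of the elements when $\ket{\lambda_j}=\ket{j}$.'' That assertion conflates $\abs{\sum_j x_j}$ with $\sum_j\abs{x_j}$: the reference-basis measurement actually attains $\sum_{i,k}\sum_j\abs{\sum_n p_n\psi^n_{i,j}(\psi^n_{k,j})^*}$, with the $j$-sum outside the modulus, which is generically strictly larger than the derived bound (in the example above, total absolute sums $2$ versus $1$). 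So your instinct that the phases cannot be tamed by a naive triangle inequality was exactly right; neither your plan nor the paper's argument can be completed, because the statement admits the counterexample above. What survives is only the weaker, basis-independent fact implicit in both derivations: no projective measurement on $B$ can push the coherence below $\coh{\rho_A}$ --- a floor that is trivially zero for the local-eigenbasis convention the paper adopts later.
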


\begin{proof}

We begin by using the spectral decomposition of a general bipartite quantum state $\rho_{AB} = \sum_n p_n \ket{\psi^n}_{AB} \bra{\psi^n}$. Assume that the subsystems have local reference bases $\{\ket{i}_A\}$ and $\{\ket{j}_B\}$ such that $\rho_{AB}= \sum_n \sum_{i,j,k,l} p_n \psi^n_{i,j} (\psi^n_{k,l})^*\ket{i,j}_{AB}\bra{k,l}$. The coherence of the system is measured with respect to these bases. To reduce clutter, we remove the subscripts pertaining to the subsystems $AB$ for the remainder of the proof. Unless otherwise stated, it should be clear from the context which subsystem every operator belong to.

Consider some complete basis on $B$, $\{ \ket{\lambda_m} \}$, and the corresponding projective measurement $\Pi_B(\rho) = \sum_m (\openone\otimes \ketbra{\lambda_m}{\lambda_m}) \,\rho\, (\openone\otimes \ketbra{\lambda_m}{\lambda_m})$. Computing the matrix elements, we get:

\begin{align*}
\braket{i,j}{\Pi_B(\rho)}{k,l} &\coloneqq \left[ \Pi_B(\rho) \right]_{i,j,k,l} \\
&= \sum_n\sum_{p,q} p_n \psi^n_{i,p} (\psi^n_{kq})^*\sum_m \braket{j}{\lambda_m \rangle\langle \lambda_m}{l} \braket{q}{\lambda_m \rangle\langle \lambda_m}{p}.
\end{align*}

Note that minimizing the absolute sum of all the matrix elements will also minimize the coherence, since the diagonal elements of any density matrix always sums to 1 and are non-negative. Consider the absolute sum of all the matrix elements of $\Pi_B(\rho)$:

\begin{align*}
\sum_{i,j,k,l}\abs{\left[ \Pi_B(\rho) \right]_{i,j,k,l}} &= \sum_{i,j,l,k} \abs{\sum_n\sum_{p,q} p_n \psi^n_{i,p} (\psi^n_{k,q})^*\sum_m \braket{j}{\lambda_m \rangle\langle \lambda_m}{l} \braket{q}{\lambda_m \rangle\langle \lambda_m}{p}} \\
&= \left( \sum_{ \substack{i,k \\ j=l}} + \sum_{\substack{i,k \\ j\neq l}} \right) \abs{\sum_n\sum_{p,q} p_n \psi^n_{i,p} (\psi^n_{k,q})^*\sum_m \braket{j}{\lambda_m \rangle\langle \lambda_m}{l} \braket{q}{\lambda_m \rangle\langle \lambda_m}{p}}  \\
&\geq  \sum_{\substack{i,k \\ j=l}} \abs{\sum_n\sum_{p,q} p_n \psi^n_{i,p} (\psi^n_{k,q})^*\sum_m \braket{j}{\lambda_m \rangle\langle \lambda_m}{l} \braket{q}{\lambda_m \rangle\langle \lambda_m}{p}} \\
&= \sum_{i,k,j} \abs{\sum_n\sum_{p,q} p_n \psi^n_{i,p} (\psi^n_{k,q})^*\sum_m \braket{j}{\lambda_m \rangle\langle \lambda_m}{j} \braket{q}{\lambda_m \rangle\langle \lambda_m}{p}} \\
&\geq \sum_{i,k} \abs{\sum_n\sum_{p,q} p_n \psi^n_{i,p} (\psi^n_{k,q})^*\sum_m \sum_j \braket{j}{\lambda_m \rangle\langle \lambda_m}{j} \braket{q}{\lambda_m \rangle\langle \lambda_m}{p}}  \\
&= \sum_{i,k} \abs{\sum_n\sum_{p,q} p_n \psi^n_{i,p} (\psi^n_{k,q})^*\sum_m \braket{q}{\lambda_m \rangle\langle \lambda_m}{p}}  \\ 
&= \sum_{i,k} \abs{\sum_n\sum_{p,q} p_n \psi^n_{i,p} (\psi^n_{k,q})^*\delta_{q,p}}  \displaybreak \\
&= \sum_{i,k} \abs{\sum_n p_n \psi^n_{i,p} (\psi^n_{k,p})^*}
\end{align*}

The first inequality comes from omitting non-negative terms in the sum, while the second inequality comes from moving a summation inside the absolute value function. Note that the final equality is exactly the absolute sum of the elements when $\ket{\lambda_j} = \ket{j}$ since:

$$
\sum_{j} \left( \openone_A \otimes  \ket{j}_B\bra{j} \right) \rho_{AB}  \left( \openone_A \otimes  \ket{j}_B\bra{j} \right) = \sum_{i,j,k} \sum_n p_n \psi^n_{i,j} (\psi^n_{k,j})^* \ketbra{i,j}{k,j}.
$$
This proves the proposition.
\end{proof}
Since any $N$-partite state $\rho_{A_1A_2\ldots A_N}$, is allowed to perform a bipartition such that $\rho_{A_1A_2\ldots A_N} = \rho_{A'A_N}$ where $A' = A_1\ldots A_{N-1}$, we also get the following corollary:

\begin{pcorollary}
For any $N$-partite state $\rho_{A_1A_2\ldots A_N}$ where the coherence is measured with respect to the local reference bases $\{\ket{i}_{A_k}\}$ and $k = 1,2, \ldots, N$, then the projective measurement on subsystem $A_k$ that induces maximal coherence loss is the projective measurement onto the local basis $\{\ket{i}_{A_k}\}$.
\end{pcorollary}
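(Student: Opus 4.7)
The plan is to reduce the $N$-partite claim to a single application of Proposition~\ref{maxCoh} by regrouping subsystems. Fix an arbitrary index $k \in \{1, \ldots, N\}$ and let $A' := A_1 \cdots A_{k-1} A_{k+1} \cdots A_N$ denote the composite subsystem consisting of every $A_j$ with $j \neq k$. Then the $N$-partite state can be rewritten as the bipartite state $\rho_{A_1 \cdots A_N} = \rho_{A' A_k}$, with $A'$ and $A_k$ playing the roles of Alice's and Bob's subsystems in the proposition.

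The only point that requires checking is that the reference basis used on $A'$ is compatible with the local bases on the $A_j$. For this, I would choose the tensor-product basis $\{\ket{i_1, \ldots, i_{k-1}, i_{k+1}, \ldots, i_N}_{A'}\}$ formed from the local bases on each constituent. With respect to this basis, the off-diagonal matrix elements of $\rho_{A'A_k}$ are indexed by exactly the same multi-indices as those of $\rho_{A_1 \cdots A_N}$, so their $l_1$-norms of coherence coincide. Likewise, a projective measurement on the single factor $A_k$ acts identically in both descriptions, so minimizing coherence loss under projective measurements on $A_k$ is the same problem in either viewpoint.

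Applying Proposition~\ref{maxCoh} to $\rho_{A'A_k}$ then immediately yields that the projective measurement on $A_k$ minimizing the residual coherence is $\sum_{i}\left(\openone_{A'} \otimes \ket{i}_{A_k}\bra{i}\right)\rho\left(\openone_{A'} \otimes \ket{i}_{A_k}\bra{i}\right)$, i.e., projection onto the local basis $\{\ket{i}_{A_k}\}$. Since $k$ was arbitrary, the same conclusion holds for every subsystem, which is exactly the statement of the corollary. I do not expect any real obstacle here: the argument is essentially a bookkeeping exercise identifying an $N$-partite product-basis coherence with a bipartite product-basis coherence under a suitable grouping, and the only nontrivial input, Proposition~\ref{maxCoh}, is already established.
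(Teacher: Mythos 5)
Your proposal is correct and takes essentially the same route as the paper, which derives the corollary in one line by bipartitioning the $N$-partite state as $\rho_{A'A_k}$ (written there as $A' = A_1\cdots A_{N-1}$ with $A_N$, the general $k$ following by relabeling) and invoking Proposition~\ref{maxCoh}. Your explicit check that the tensor-product reference basis on $A'$ reproduces the same $l_1$-coherence and that the measurement on $A_k$ acts identically in both descriptions is a bookkeeping detail the paper leaves implicit, but it changes nothing in substance.
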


\section{Local and Correlated Coherence}

Now consider a bipartite state $\rho_{AB}$, with total coherence $\coh{\rho_{AB}}$ with respect to local reference bases $\{\ket{i}_A\}$ and $\{\ket{j}_B\}$. Then $\coh{\rho_A}$ can be interpreted as the coherence that is local to $A$. Similarly, $\coh{\rho_B}$ is the portion of the coherence that is local to $B$. In general, the sum of the total local coherences is not necessarily the same as the total coherence in the system. It is therefore reasonable to suppose that a portion of the quantum coherences are not stored locally, but within the correlations of the system itself.

\begin{definition} [Correlated Coherence] With respect to local reference bases $\{\ket{i}_A\}$ and $\{\ket{j}_B\}$, Correlated Coherence for a bipartite quantum system is given by subtracting local coherences from the total coherence: 

$$
\cc{\rho_{AB}} \coloneqq \coh{\rho_{AB}}- \coh{\rho_{A}} - \coh{\rho_{B}}
$$

Where $\rho_{A}$ and $\rho_{B}$ are the reduced density matrices of $A$ and $B$ respectively.
\end{definition}

Further reinforcing the idea that the local coherences form only a portion of the total coherence present in a quantum system, we prove the following property:

\begin{theorem} For any bipartite quantum state $\rho_{AB}$,  $\cc{\rho_{AB}} \geq 0 $ (i.e. Correlated Coherence is always non-negative).
\end{theorem}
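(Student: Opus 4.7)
The plan is to exploit the explicit form of the $l_1$-norm coherence, partition the off-diagonal index set of $\rho_{AB}$ according to whether the row/column labels agree on the $A$ or $B$ factor, and apply the triangle inequality on each piece.

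Write $\rho_{i,j,k,l} = \braket{i,j}{\rho_{AB}}{k,l}$, so that
\begin{equation*}
\coh{\rho_{AB}} = \sum_{(i,j)\neq(k,l)} |\rho_{i,j,k,l}|.
\end{equation*}
I would split the index set $\{(i,j,k,l) : (i,j)\neq(k,l)\}$ into three disjoint classes: (a) $i\neq k$ and $j=l$; (b) $i=k$ and $j\neq l$; (c) $i\neq k$ and $j\neq l$. Then
\begin{equation*}
\coh{\rho_{AB}} = \sum_{i\neq k}\sum_j |\rho_{i,j,k,j}| + \sum_i\sum_{j\neq l} |\rho_{i,j,i,l}| + \sum_{\substack{i\neq k\\ j\neq l}} |\rho_{i,j,k,l}|.
\end{equation*}

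Next I would compute the matrix elements of the reduced states via partial trace, $(\rho_A)_{i,k} = \sum_j \rho_{i,j,k,j}$ and $(\rho_B)_{j,l} = \sum_i \rho_{i,j,i,l}$, so that
\begin{equation*}
\coh{\rho_A} = \sum_{i\neq k}\left|\sum_j \rho_{i,j,k,j}\right|, \qquad \coh{\rho_B} = \sum_{j\neq l}\left|\sum_i \rho_{i,j,i,l}\right|.
\end{equation*}
The triangle inequality pushes the sum over $j$ (resp.\ $i$) inside the absolute value, giving
\begin{equation*}
\sum_{i\neq k}\sum_j |\rho_{i,j,k,j}| \geq \coh{\rho_A}, \qquad \sum_i\sum_{j\neq l} |\rho_{i,j,i,l}| \geq \coh{\rho_B}.
\end{equation*}
The third class (c) contributes a manifestly non-negative quantity and can simply be dropped. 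Chaining these estimates yields $\coh{\rho_{AB}} \geq \coh{\rho_A} + \coh{\rho_B}$, which is exactly $\cc{\rho_{AB}} \geq 0$.

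There is no serious obstacle here; the only delicate point is the bookkeeping of the index partition, ensuring that the three classes are disjoint and jointly exhaust $(i,j)\neq(k,l)$ exactly once, and that after applying partial traces the surviving indices line up with the off-diagonal pattern required by the $l_1$ coherence of the marginals. Once this is set up correctly, the triangle inequality supplies the entire proof in one step, and no appeal to the preceding Proposition on maximal coherence loss is needed (although that result could alternatively be combined with a triangle inequality argument to reach the same conclusion).
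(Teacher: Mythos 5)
Your proof is correct and follows essentially the same route as the paper's: bound $\coh{\rho_A}$ and $\coh{\rho_B}$ via the triangle inequality by the corresponding ($j=l$, resp.\ $i=k$) off-diagonal blocks of $\coh{\rho_{AB}}$, and observe that the remaining $i\neq k$, $j\neq l$ block is non-negative. The only difference is cosmetic --- you work directly with the matrix elements $\rho_{i,j,k,l}$ where the paper inserts an (inessential) spectral decomposition $\rho_{i,j,k,l}=\sum_n p_n \psi^n_{i,j}(\psi^n_{k,l})^*$.
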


\begin{proof}
Let $\rho_{AB}= \sum_n \sum_{i,j,k,l} p_n \psi^n_{i,j} (\psi^n_{k,l})^*\ket{i,j}_{AB}\bra{k,l}$, then

\begin{align*}
\cc{\rho_{AB}} &= \coh{\rho_{AB}}- \coh{\rho_{A}} - \coh{\rho_{B}} \\ 
&= \sum_{\substack{(i,j) \\ \neq (k,l)}} \abs{\sum_n p_n \psi^n_{i,j} (\psi^n_{k,l})^* } - \sum_{i \neq k } \abs{\sum_n p_n\sum_{j} \psi^n_{i,j} (\psi^n_{k,j})^* } -\sum_{j \neq l} \abs{\sum_n p_n \sum_i \psi^n_{i,j} (\psi^n_{i,l})^* } \\ 
&\geq  \sum_{\substack{(i,j) \\ \neq (k,l)}} \abs{\sum_n p_n \psi^n_{i,j} (\psi^n_{k,l})^* } - \sum_{\substack{j \\ i \neq k }} \abs{\sum_n p_n \psi^n_{i,j} (\psi^n_{k,j})^* } -\sum_{\substack{ i \\ j \neq l}} \abs{\sum_n p_n \psi^n_{i,j} (\psi^n_{i,l})^* } \\ 
&= \left( \sum_{\substack{(i,j) \\ \neq (k,l)}} - \sum_{\substack{j = l \\ i \neq k }}-\sum_{\substack{ i = k \\ j \neq l}} \right ) \abs{\sum_n p_n \psi^n_{i,j} (\psi^n_{k,l})^* }.
\end{align*}

The inequality comes from moving a summation outside of the absolute value function. Since $\sum_{\substack{(i,j) \\ \neq (k,l)}}  = \sum_{\substack{j \neq l \\ i \neq k }}+\sum_{\substack{j = l \\ i \neq k }}+\sum_{\substack{ i = k \\ j \neq l}}$, the final equality above is always a sum of non-negative values, which completes the proof.

\end{proof}

\section{Correlated Coherence and Quantum Discord}

Of particular interest to the study of quantum correlations is the idea that certain correlations are quantum and certain correlations are classical. In this section, we will demonstrate that Correlated Coherence is able to unify many of these concepts of quantumness under the same framework.

First, note that in our definition of Correlated Coherence, the choice of reference bases is not unique, while most definitions of quantum correlations are independent of specific basis choices. However, we can retrieve basis independence via a very natural choice of local bases. For every bipartite state $\rho_{AB}$, the reduced density matrices $\rho_A$ and $\rho_B$ has eigenbases $\{ \ket{\alpha_i} \}$ and $\{\ket{\beta_i}\}$ respectively. By choosing these local bases, $\rho_A$ and $\rho_B$ are both diagonal and the local coherences are zero. The implication of this is that for such a choice, \textit{the coherence in the system is stored entirely within the correlations}. Since this can be done for any $\rho_{AB}$, Correlated Coherence with respect to these bases becomes a state dependent property as required. For the rest of the paper, unless otherwise stated, we will assume that the choice of local bases for the calculation of Correlated Coherence will always be the local eigenbases of Alice and Bob.


%

We first consider the definition of a quantum correlation in the symmetrized version of quantum discord. Under the framework of symmetric discord, a state contains quantum correlations when it cannot be expressed in the form $\rho_{AB} = \sum_{i,j} p_{i,j}\ket{i}_A\bra{i}\otimes \ket{j}_B \bra{j}$, where $\{\ket{i}_A\}$ and $\{\ket{j}_B\}$ are sets of orthonormal vectors. Any such state has zero symmetric discord by definition.

We prove the following theorem:

\begin{theorem} [Correlated Coherence and Symmetric Quantum Discord] \label{symDisc}
For a given state $\rho_{AB}$, $\cqc{\rho_{AB}} = 0$ iff $\rho_{AB} = \sum_{i,j} p_{i,j}\ket{i}_A\bra{i}\otimes \ket{j}_B \bra{j}$. 
\end{theorem}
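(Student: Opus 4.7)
The plan is to exploit the convention just established in the paper that Correlated Coherence is evaluated in the local eigenbases of $\rho_A$ and $\rho_B$. In those bases both reductions are diagonal, so $\coh{\rho_A}=\coh{\rho_B}=0$, and the defining identity $\cc{\rho_{AB}}=\coh{\rho_{AB}}-\coh{\rho_A}-\coh{\rho_B}$ collapses to $\cqc{\rho_{AB}}=\coh{\rho_{AB}}$ computed in the product of local eigenbases. The whole theorem therefore reduces to a statement about when the total $l_1$-coherence of $\rho_{AB}$ in that product basis vanishes.

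For the $(\Leftarrow)$ direction, starting from $\rho_{AB}=\sum_{i,j}p_{i,j}\ket{i}_A\bra{i}\otimes\ket{j}_B\bra{j}$, I would compute $\rho_A=\sum_i\bigl(\sum_j p_{i,j}\bigr)\ket{i}_A\bra{i}$ and the analogous $\rho_B$, observing that the bases $\{\ket{i}_A\}$ and $\{\ket{j}_B\}$ appearing in the decomposition are themselves valid eigenbases of the reductions. Taking these as the local eigenbases used to evaluate $\cqc$, the full state $\rho_{AB}$ is already diagonal in the product basis $\{\ket{i,j}_{AB}\}$, so every off-diagonal matrix element vanishes and $\cqc{\rho_{AB}}=0$ immediately.

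For the $(\Rightarrow)$ direction, let $\{\ket{\alpha_i}\}$ and $\{\ket{\beta_j}\}$ be fixed local eigenbases of $\rho_A$ and $\rho_B$. The hypothesis $\cqc{\rho_{AB}}=0$ together with the automatic vanishing of both local coherences forces $\coh{\rho_{AB}}=\sum_{(i,j)\neq(k,l)}\bigl|\braket{\alpha_i,\beta_j}{\rho_{AB}}{\alpha_k,\beta_l}\bigr|=0$, so every off-diagonal matrix element of $\rho_{AB}$ in the product eigenbasis vanishes. Setting $p_{i,j}:=\braket{\alpha_i,\beta_j}{\rho_{AB}}{\alpha_i,\beta_j}\geq 0$ and $\sum_{i,j}p_{i,j}=1$, I obtain $\rho_{AB}=\sum_{i,j}p_{i,j}\ket{\alpha_i}\bra{\alpha_i}\otimes\ket{\beta_j}\bra{\beta_j}$, which is exactly the claimed classical-classical form with $\ket{i}_A=\ket{\alpha_i}$ and $\ket{j}_B=\ket{\beta_j}$.

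The main obstacle I anticipate is the treatment of degenerate spectra: when $\rho_A$ or $\rho_B$ has repeated eigenvalues the local eigenbasis is not unique, so in the $(\Leftarrow)$ step one has to argue that the bases appearing in the given decomposition are admissible choices of local eigenbases (which they are, since $\rho_A$ and $\rho_B$ are manifestly diagonal in them), and in the $(\Rightarrow)$ step one should observe that the conclusion is insensitive to the particular eigenbasis chosen within each degenerate eigenspace. Beyond this bookkeeping, the proof is essentially a direct computation, the entire nontrivial content being the preceding convention that the reference bases are the local eigenbases and the definition of Correlated Coherence as $\coh{\rho_{AB}}-\coh{\rho_A}-\coh{\rho_B}$.
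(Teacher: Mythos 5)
Your proposal is correct and follows essentially the same route as the paper's proof: fix the local eigenbases of $\rho_A$ and $\rho_B$ so that the local coherences vanish, observe that $\cqc{\rho_{AB}}=0$ then forces $\coh{\rho_{AB}}=0$ and hence diagonality of $\rho_{AB}$ in the product eigenbasis (giving the classical--classical form), with the converse following because such a state is manifestly incoherent in those bases. Your explicit handling of degenerate spectra is a point of care the paper's terser argument passes over silently, but it does not change the structure of the proof.
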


\begin{proof}

If $\{ \ket{i}_A \}$ and $\{\ket{j}_B\}$ are the the eigenbases of $\rho_A$ and $\rho_B$, then $\cqc{\rho_{AB}} = 0$ implies $\coh{\rho_{AB}} = 0$ which implies $\rho_{AB}$ only has diagonal terms, so $\rho_{AB} = \sum_{i,j} p_{i,j}\ket{i}_A\bra{i}\otimes \ket{j}_B \bra{j}$. Therefore, $\cqc{\rho_{AB}} =0 \implies \rho_{AB} = \sum_{i,j} p_{i,j}\ket{i}_A\bra{i}\otimes \ket{j}_B \bra{j}$.

Conversely, if $\rho_{AB} = \sum_{i,j} p_{i,j}\ket{i}_A\bra{i}\otimes \ket{j}_B \bra{j}$, then the state clearly has zero coherence, which implies $\cqc{\rho_{AB}} =0$, so the converse is also true. This proves the theorem.
\end{proof}

This establishes a relationship between Correlated Coherence and symmetric discord. We now consider the asymmetric version of quantum discord. Under this framework, a state contains quantum correlations when it cannot be expressed in the form $\rho_{AB} = \sum_{i} p_{i}\ket{i}_A\bra{i}\otimes \rho_B^i$, where $\rho_B^j$ is some normalized density matrix. 

We prove the following:

\begin{theorem} [Correlated Coherence and Asymmetric Quantum Discord] \label{AsymDisc}
For a given state $\rho_{AB}$, let $\{ \ket{i}_A \}$ and $\{\ket{j}_B\}$ be the the eigenbases of $\rho_A$ and $\rho_B$ respectively. Define the measurement on $A$ onto the local basis as $\Pi_A(\rho_{AB}) \coloneqq \sum_i ( \ket{i}_A\bra{i} \otimes \openone_B) \rho_{AB} ( \ket{i}_A\bra{i} \otimes \openone_B)$. Then, with respect to these local bases, $\cqc{\rho_{AB}} - \cqc{\Pi_A(\rho_{AB})}= 0$ iff $\rho_{AB} = \sum_{i} p_{i}\ket{i}_A\bra{i}\otimes \rho_B^i$, where $\rho_B^i$ is some normalized density matrix and $\{\ket{i}_A\}$ is some set of orthonormal vectors. 
\end{theorem}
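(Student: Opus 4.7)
The plan is to reduce the claimed equality to a comparison of raw coherences and then read off the classical--quantum structure from a matrix-element identity. First I would observe that when $\{\ket{i}_A\}$ and $\{\ket{j}_B\}$ are eigenbases of the marginals, both $\coh{\rho_A}$ and $\coh{\rho_B}$ vanish. Moreover, $\Pi_A$ preserves both marginals: $\Tr_A \Pi_A(\rho_{AB}) = \rho_B$ is immediate from completeness of the projectors, while $\Tr_B \Pi_A(\rho_{AB}) = \sum_i \ket{i}_A\bra{i}\rho_A\ket{i}_A\bra{i} = \rho_A$ holds because $\rho_A$ is already diagonal in $\{\ket{i}_A\}$. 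Hence the local coherences of $\Pi_A(\rho_{AB})$ also vanish, and the hypothesis collapses to the single equation $\coh{\rho_{AB}} = \coh{\Pi_A(\rho_{AB})}$.

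Next I would unpack this equality in matrix elements. Writing $\rho_{i,j,k,l} = \bra{i,j}\rho_{AB}\ket{k,l}$, the measured state has entries $[\Pi_A(\rho_{AB})]_{i,j,k,l} = \delta_{i,k}\,\rho_{i,j,i,l}$. The $i=k$, $j\neq l$ terms contribute identically to the two $\ell_1$ coherences, while the $i\neq k$ terms are erased entirely by $\Pi_A$, so
\begin{align*}
\coh{\rho_{AB}} - \coh{\Pi_A(\rho_{AB})} = \sum_{i \neq k,\, j,\, l} \abs{\rho_{i,j,k,l}}.
\end{align*}
This is a sum of non-negative terms, and vanishes iff $\rho_{i,j,k,l} = 0$ whenever $i \neq k$. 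That block-diagonality condition says precisely $\rho_{AB} = \sum_i \ket{i}_A\bra{i} \otimes M_i$ for some positive operators $M_i$ on $B$; setting $p_i = \Tr(M_i)$ and $\rho_B^i = M_i / p_i$ gives the claimed classical--quantum form. The converse is immediate, since if $\rho_{AB}$ is already in this form with $\{\ket{i}_A\}$ the chosen eigenbasis of $\rho_A$, then $\Pi_A(\rho_{AB}) = \rho_{AB}$ and the two coherences agree trivially.

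The main subtlety I expect to address is the potential non-uniqueness of the eigenbasis of $\rho_A$. A classical--quantum decomposition $\sum_i p_i \ket{i}_A\bra{i} \otimes \rho_B^i$ forces $\{\ket{i}_A\}$ to diagonalize $\rho_A$, but it need not single out a unique basis within a degenerate eigenspace. To keep the ``iff'' clean I would interpret the theorem as allowing a compatible eigenbasis of $\rho_A$ to be fixed at the outset; once such a choice is made, the matrix-element computation above goes through verbatim. Apart from this uniqueness caveat, the argument is just a short calculation in the same spirit as the proof of Proposition~\ref{maxCoh}.
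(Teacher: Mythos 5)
Your proof is correct and follows essentially the same route as the paper's: both reduce the hypothesis to the single equality $\coh{\rho_{AB}} = \coh{\Pi_A(\rho_{AB})}$ in the fixed local eigenbases and read off the block-diagonality condition $\rho_{i,j,k,l}=0$ for $i\neq k$, which is exactly the classical--quantum form. Your write-up is in fact somewhat more complete than the paper's, since you explicitly verify that $\Pi_A$ preserves both marginals (so the local coherences of the measured state vanish with respect to the same bases) and you flag the degenerate-eigenbasis caveat that the paper's converse silently assumes away by taking the measured basis to coincide with the classical--quantum basis.
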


\begin{proof}
First, we write the state in the form $\rho_{AB}= \sum_{i,j,k,l}\rho_{ijkl}\ket{i,j}_{AB}\bra{k,l}$. We can always write the state in block matrix form such that $\rho_{AB}= \sum_{i,k}\ket{i}_A\bra{k} \otimes \rho_B^{i,k}$ where $\rho_B^{i,k} \coloneqq \sum_{j,k}\rho_{ijkl}\ket{j}_B \bra{l}$. If $\{ \ket{i}_A \}$ and $\{\ket{j}_B\}$ are the the eigenbases of $\rho_A$ and $\rho_B$, then $\cqc{\rho_{AB}} -  \cqc{\Pi_A(\rho_{AB})}= 0$ implies that when $i \neq k$, $\rho_B^{i,k} = 0$. This implies $\rho_{AB}= \sum_{i}\ket{i}_A\bra{i} \otimes \rho_B^{i,i}$. By defining $\rho_B^i = \rho_B^{i,i}/p_i$ where $p_i \coloneqq \Tr{\rho_B^{i,i}}$, we get $\rho_{AB} = \sum_{i} p_{i}\ket{i}_A\bra{i}\otimes \rho_B^i$. Therefore, $\cqc{\rho_{AB}} - \cqc{\Pi_A(\rho_{AB})}= 0 \implies \rho_{AB} = \sum_{i} p_{i}\ket{i}_A\bra{i}\otimes \rho_B^i$.

For the converse, if $\rho_{AB} = \sum_{i} p_{i}\ket{i}_A\bra{i}\otimes \rho_B^i$, then clearly, $\Pi_A(\rho_{AB})= \rho_{AB}$, so $\cqc{\rho_{AB}} - \cqc{\Pi_A(\rho_{AB})}= 0$. This completes the proof.
\end{proof}

Note that the above relationship with asymmetric quantum discord is expressed as a \textit{difference} between the Correlated Coherence of $\rho_{AB}$ and the post measurement state $\Pi_A(\rho_{AB})$. While this characterization of quantum correlations may at first appear to diverge from the one given in Theorem~\ref{AsymDisc}, they are actually similar since $\cqc{\Pi_A \Pi_B (\rho_{AB})} = 0$ so  $\cqc{\rho_{AB}}= \cqc{\rho_{AB}} - \cqc{\Pi_A \Pi_B (\rho_{AB})}$. It is therefore possible to interpret quantum discord as the correlated coherence loss when either party performs a maximally coherence destroying measurement only their subsystems (See Proposition~\ref{maxCoh}). When the projective measurement is performed only on one side, one retrieves the asymmetric version of quantum discord, and the symmetrized version is obtained when the coherence destroying measurement is performed by both parties.

\section{Correlated Coherence and Entanglement}

 Under the framework of entangled correlations, a state contains quantum correlations when it cannot be expressed as a convex combination of product states $\sum_{i}p_i \ket{\alpha_i}_A \bra{\alpha_i} \otimes  \ket{\beta_i}_B \bra{\beta_i}$, where $\ket{\alpha_i}$ and $\ket{\beta_i}$ are normalized but not necessarily orthogonal vectors that can repeat. It is also possible to extend our methodology to entangled quantum states. In order to do this, we consider extensions of the quantum state $\rho_{AB}$. We say that a state $\rho_{ABC}$ is an extension of $\rho_{AB}$ if $\mathrm{Tr}_C(\rho_{ABC}) = \rho_{AB}$. For our purpose, we will consider extensions of the form $\rho_{AA'BB'}$.

\begin{theorem} \label{separability}
Let  $\rho_{AA'BB'}$ be some extension of a bipartite state $\rho_{AB}$ and choose the local bases to be the eigenbases of $\rho_{AA'}$ and $\rho_{BB'}$ respectively. Then with respect to these local bases, $\min \cqc{\rho_{AA'BB'}} = 0$ iff $\rho_{AB} = \sum_{i}p_i \ket{\alpha_i}_A \bra{\alpha_i} \otimes  \ket{\beta_i}_B \bra{\beta_i}$ for some set of normalized vectors $\ket{\alpha_i}$ and $\ket{\beta_i}$ that are not necessarily orthogonal and may repeat. The minimization is over all possible extensions of $\rho_{AB}$ of the form $\rho_{AA'BB'}$.
\end{theorem}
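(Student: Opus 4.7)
The plan is to combine the non-negativity of correlated coherence (the preceding theorem) with the classical-classical characterization furnished by Theorem~\ref{symDisc}, applied now to the bipartition $(AA')|(BB')$. Since correlated coherence is non-negative on any extension, the infimum of $\cqc{\rho_{AA'BB'}}$ is attained at $0$ exactly when some extension has $\cqc{\rho_{AA'BB'}} = 0$; Theorem~\ref{symDisc} then translates that vanishing into a diagonal structure in the eigenbases of $\rho_{AA'}$ and $\rho_{BB'}$, and I only need to relate this structure to separability of $\rho_{AB}$ via partial trace.

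For the ``if'' direction, suppose a separable decomposition $\rho_{AB} = \sum_i p_i \ket{\alpha_i}_A\bra{\alpha_i} \otimes \ket{\beta_i}_B\bra{\beta_i}$ is given, and introduce flag ancillas with orthonormal bases $\{\ket{i}_{A'}\}$, $\{\ket{i}_{B'}\}$. I would define the extension
$$
\rho_{AA'BB'} \coloneqq \sum_i p_i \, \ket{\alpha_i}_A\bra{\alpha_i} \otimes \ket{i}_{A'}\bra{i} \otimes \ket{\beta_i}_B\bra{\beta_i} \otimes \ket{i}_{B'}\bra{i},
$$
which manifestly satisfies $\mathrm{Tr}_{A'B'}(\rho_{AA'BB'}) = \rho_{AB}$. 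Writing $\ket{u_i}_{AA'} \coloneqq \ket{\alpha_i}_A \ket{i}_{A'}$ and $\ket{v_i}_{BB'} \coloneqq \ket{\beta_i}_B \ket{i}_{B'}$, the orthogonality of the flags forces $\{\ket{u_i}\}$ and $\{\ket{v_i}\}$ to be orthonormal in $AA'$ and $BB'$ respectively, even though $\{\ket{\alpha_i}\}$ and $\{\ket{\beta_i}\}$ need not be. Consequently $\rho_{AA'} = \sum_i p_i \ket{u_i}\bra{u_i}$ and $\rho_{BB'} = \sum_i p_i \ket{v_i}\bra{v_i}$, making $\{\ket{u_i}\}$ and $\{\ket{v_i}\}$ valid eigenbases of the two reductions. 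In the product eigenbasis $\{\ket{u_i,v_j}\}$ the extension is diagonal, so its total coherence and both local coherences vanish; hence $\cqc{\rho_{AA'BB'}} = 0$ and the minimum is attained at zero.

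For the ``only if'' direction, assume some extension satisfies $\cqc{\rho_{AA'BB'}} = 0$ with the local bases fixed to be eigenbases of $\rho_{AA'}$ and $\rho_{BB'}$. Because the local coherences already vanish in the eigenbases, this collapses to $\coh{\rho_{AA'BB'}} = 0$, and Theorem~\ref{symDisc} (applied to the bipartition $AA'|BB'$) yields $\rho_{AA'BB'} = \sum_{m,n} q_{mn}\, \ket{m}_{AA'}\bra{m} \otimes \ket{n}_{BB'}\bra{n}$ for orthonormal $\{\ket{m}\}$ and $\{\ket{n}\}$. Tracing out $A'B'$ gives $\rho_{AB} = \sum_{m,n} q_{mn}\, \sigma^m_A \otimes \tau^n_B$, where $\sigma^m_A = \mathrm{Tr}_{A'}\!\ket{m}_{AA'}\bra{m}$ and $\tau^n_B = \mathrm{Tr}_{B'}\!\ket{n}_{BB'}\bra{n}$ are genuine density matrices on $A$ and $B$. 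Substituting the spectral decompositions of each $\sigma^m_A$ and each $\tau^n_B$ expresses $\rho_{AB}$ as a convex combination of pure product states, which is exactly the separability condition.

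The main obstacle I expect lies in the ``only if'' direction: a priori the partial trace could break the product structure needed for separability, since $\sigma^m_A \otimes \tau^n_B$ is a product of mixed states, not of pure states. The resolution is the elementary observation that every mixed state admits a pure-state decomposition, so a convex mixture of mixed product states is still a convex mixture of pure product states. A secondary worry is degeneracy in the spectra of $\rho_{AA'}$ or $\rho_{BB'}$, which could make the phrase ``the eigenbases'' ambiguous; but in the construction above the bases $\{\ket{u_i}\}$ and $\{\ket{v_i}\}$ provide an explicit admissible choice that realizes the minimum, so the ambiguity is harmless.
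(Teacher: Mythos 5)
Your proposal is correct and follows essentially the same route as the paper: the forward direction applies Theorem~\ref{symDisc} to the $AA'|BB'$ bipartition and then decomposes the mixed product states $\sigma^m_A \otimes \tau^n_B$ into pure product states, exactly as in the paper's proof. Your explicit flag-ancilla extension in the converse is the very same state the paper obtains by purifying $\rho_{AB}$ and tracing out $C'$, so the difference is purely presentational, and your added remarks on attainment of the minimum and on degenerate spectra only make explicit what the paper leaves implicit.
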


\begin{proof}
If $\inf \cqc{\rho_{AA'BB'}} = 0$, then $\rho_{AA'BB'}$ must have the form $\sum_{i,j}p_{i,j}\ket{\mu_i}_{AA'}\bra{\mu_i}\otimes \ket{\nu_j}_{BB'}\bra{\nu_j}$ (See Thm.~\ref{symDisc}). Since $\rho_{AA'BB'}$ is an extension, $\mathrm{Tr}_{A'}\mathrm{Tr}_{B'}(\rho_{AA'BB'}) = \sum_{i,j}p_{i,j}\mathrm{Tr}_{A'}(\ket{\mu_i}_{AA'}\bra{\mu_i})\otimes \mathrm{Tr}_{B'} (\ket{\nu_j}_{BB'}\bra{\nu_j}) = \rho_{AB}$. Let $\rho^i_A \coloneqq \mathrm{Tr}_{A'}(\ket{\mu_i}_{AA'}\bra{\mu_i})$ and $\rho^j_B \coloneqq \mathrm{Tr}_{B'}(\ket{\nu_j}_{BB'}\bra{\nu_j})$. Then, $\rho_{AB} = \sum_{i,j}p_{i,j} \rho^i_A\otimes \rho^j_B$. This is equivalent to saying $\rho_{AB}=\sum_{i}p_i \ket{\alpha_i}_A \bra{\alpha_i} \otimes  \ket{\beta_i}_B \bra{\beta_i}$, for some set of (non necessarily orthogonal) vectors $\{\ket{\alpha_i}\}$ and $ \{\ket{\beta_i}\}$. This proves $\min \cqc{\rho_{AA'BB'}} = 0 \implies \rho_{AB} = \sum_{i}p_i \ket{\alpha_i}_A \bra{\alpha_i} \otimes  \ket{\beta_i}_B \bra{\beta_i}$

For the converse, suppose $\rho_{AB} = \sum_{i}p_i \ket{\alpha_i}_A \bra{\alpha_i} \otimes  \ket{\beta_i}_B \bra{\beta_i}$, consider the purification of $\rho_{AB}$ of the form $\ket{\psi}_{ABA'B'C'} = \sum_i \sqrt{p_i}\ket{\alpha_i}_A \ket{\beta_i}_B \ket{i}_{A'}\ket{i}_{B'}\ket{i}_{C'}$. Since this is a purification, $\rho_{AA'BB'} = \mathrm{Tr}_{C'}(\ket{\psi}_{ABA'B'C'}\bra{\psi})$ is clearly an extension of $\rho_{AB}$. Furthermore, the eigenbases of $\rho_{AA'}$ and $\rho_{BB'}$ are $\{ \ket{\alpha_i}_A \ket{i}_{A'}\}$ and $\{ \ket{\beta_i}_B \ket{i}_{B'}\}$ respectively. Since $\rho_{AA'BB'} = \sum_i p_i \ket{\alpha_i}_A \ket{i}_{A'} \bra{\alpha_i}_A \bra{i}_{A'} \otimes \ket{\beta_i}_B \ket{i}_{B'}\bra{\beta_i}_B \bra{i}_{B'}$, $\cqc{\rho_{AA'BB'}} = 0$ with respect to the eigenbases of $\rho_{AA'}$ and $\rho_{BB'}$. Therefore, $\min \cqc{\rho_{AA'BB'}} = 0$, which completes the proof.
\end{proof}

\section{Coherence as an Entanglement Monotone}
We now construct an entanglement monotone using the Correlated Coherence of a quantum state. In order to do this, we first define symmetric extensions of a given quantum state:

\begin{definition} [Unitarily Symmetric Extensions] \label{unitSym}
Let $\rho_{AA'BB'}$ be an extension of a bipartite state $\rho_{AB}$. The extension $\rho_{AA'BB'}$ is said to be unitarily symmetric if it remains invariant up to local unitaries on $AA'$ and $BB'$ under a system swap between Alice and Bob.

More formally, let $\{\ket{i}_{AA'}\}$ and $\{\ket{i}_{BB'}\}$ be complete  local bases on $AA'$ and $BB'$ respectively. Define the swap operator through $U_{\mathrm{swap}} \ket{i,j}_{AA'BB'} \coloneqq \ket{j,i}_{AA'BB'}$ . Then $\rho_{AA'BB'}$ is unitarily symmetric if there exists local unitary operations $U_{AA'}$ and $U_{BB'}$ such that $ U_{AA'}\otimes U_{BB'} \left(U_{\mathrm{swap}} \rho_{AA'BB'} U_{\mathrm{swap}}^\dag \right) U_{AA'}^\dag \otimes U_{BB'}^\dag = \rho_{AA'BB'}$.
\end{definition}

Following from the observation that the minimization of coherence over all extensions is closely related to the the separability of a quantum state, we define the following:

\begin{definition}
Let  $\rho_{AA'BB'}$ be some extension of a bipartite state $\rho_{AB}$ and choose the local bases to be the eigenbases of $\rho_{AA'}$ and $\rho_{BB'}$ respectively. Then the entanglement of coherence (EOC) is defined to be:

$$
E_{cqc}(\rho_{AB}) \coloneqq \min \cqc{\rho_{AA'BB'}}
$$

The minimization is over all possible unitarily symmetric extensions of $\rho_{AB}$ of the form $\rho_{AA'BB'}$.
\end{definition}

It remains to be proven that $E_{cqc}(\rho_{AB})$ is a valid measure of entanglement (i.e. it is an entanglement monotone). But first, we prove the following elementary properties.

\begin{proposition} [EOC of Separable States] \label{separableEcqc}
If a bipartite quantum state $\rho_{AB}$ is separable, $\ecqc{\rho_{AB}} = 0$.
\end{proposition}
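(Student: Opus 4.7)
The overall strategy is to reuse the extension that already appeared in the proof of Theorem~\ref{separability} and verify that, with a minor dimensional padding, it is unitarily symmetric in the sense of Definition~\ref{unitSym}. Since $\ecqc{\rho_{AB}}$ is a minimum of the nonnegative quantity $\cqc{\cdot}$, exhibiting one unitarily symmetric extension with vanishing correlated coherence suffices.

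Given a separable decomposition $\rho_{AB} = \sum_i p_i \ket{\alpha_i}_A\bra{\alpha_i} \otimes \ket{\beta_i}_B\bra{\beta_i}$, I would take ancillas $A'$ and $B'$ each carrying a flag register $\{\ket{i}\}$ indexed by the decomposition, and pad the ancilla spaces with zero-weight dimensions so that $\dim \mathcal{H}_{AA'} = \dim \mathcal{H}_{BB'}$ (needed so that $U_{\mathrm{swap}}$ is well defined). Then define
$$
\rho_{AA'BB'} := \sum_i p_i \ket{\alpha_i}_A\bra{\alpha_i}\otimes \ket{i}_{A'}\bra{i} \otimes \ket{\beta_i}_B\bra{\beta_i}\otimes \ket{i}_{B'}\bra{i}.
$$
This is precisely the extension obtained in the proof of Theorem~\ref{separability} by tracing a suitable purification over an auxiliary register, and that theorem already gives $\cqc{\rho_{AA'BB'}} = 0$ in the local eigenbases of $\rho_{AA'}$ and $\rho_{BB'}$, since $\rho_{AA'BB'}$ is diagonal in the orthonormal sets $\{\ket{\alpha_i}\ket{i}\}$ and $\{\ket{\beta_i}\ket{i}\}$ (orthonormal because the flag register separates the different $i$'s).

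The remaining task is unitary symmetry. Applying $U_{\mathrm{swap}}$ and using the basis identification between $AA'$ and $BB'$ converts $\rho_{AA'BB'}$ into $\sum_i p_i \ket{\beta_i, i}_{AA'}\bra{\beta_i, i} \otimes \ket{\alpha_i, i}_{BB'}\bra{\alpha_i, i}$; that is, the two orthonormal frames $\{\ket{\alpha_i, i}\}$ and $\{\ket{\beta_i, i}\}$ merely change places. Because both sets are orthonormal of the same cardinality, there exist unitaries $U_{AA'}$ and $U_{BB'}$ sending $\ket{\beta_i, i} \mapsto \ket{\alpha_i, i}$ on $AA'$ and $\ket{\alpha_i, i} \mapsto \ket{\beta_i, i}$ on $BB'$ respectively (extend arbitrarily on the orthogonal complements). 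Conjugating the swapped state by $U_{AA'} \otimes U_{BB'}$ returns $\rho_{AA'BB'}$, establishing unitary symmetry and completing the proof.

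I expect the only real subtlety to be the bookkeeping around the swap, namely equalizing $\dim \mathcal{H}_{AA'}$ and $\dim \mathcal{H}_{BB'}$ by padding (which does not affect the marginals or the correlated coherence), and confirming that a single local unitary on each side can intertwine the two orthonormal frames. Both points are elementary once the diagonal structure of the extension is in place, so the argument should be quite short.
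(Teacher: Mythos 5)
Your proposal is correct and follows essentially the same route as the paper: reuse the extension $\sum_i p_i \ket{\alpha_i,i}_{AA'}\bra{\alpha_i,i}\otimes\ket{\beta_i,i}_{BB'}\bra{\beta_i,i}$ from Theorem~\ref{separability} and show it is unitarily symmetric by choosing local unitaries that intertwine the two orthonormal frames $\{\ket{\alpha_i,i}\}$ and $\{\ket{\beta_i,i}\}$ after the swap, which is exactly the paper's construction of $U_{AA'}$ and $U_{BB'}$. Your added bookkeeping (padding so that $\dim\mathcal{H}_{AA'}=\dim\mathcal{H}_{BB'}$, and noting the flag register guarantees orthonormality) only makes explicit details the paper leaves implicit.
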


\begin{proof}
The proof is identical to Thm.~\ref{separability}, with the additional observation that $\rho_{AA'BB'} = \sum_{i,j}p_{i,j}\ket{\mu_i}_{AA'}\bra{\mu_i}\otimes \ket{\nu_j}_{BB'}\bra{\nu_j}$ is unitarily symmetric. To see this, define $U_{AA'} \coloneqq \sum_{i} \ket{\nu_i}_{AA'}\bra{\mu_i}$ and $U_{BB'} \coloneqq \sum_{i} \ket{\mu_i}_{AA'}\bra{\nu_i}$. It is easy to verify that is satisfies

$$U_{AA'}\otimes U_{BB'} \left(U_{\mathrm{swap}} \rho_{AA'BB'} U_{\mathrm{swap}}^\dag \right) U_{AA'}^\dag \otimes U_{BB'}^\dag = \rho_{AA'BB'}$$

where $U_{\mathrm{swap}}$ is the same as in Def.~\ref{unitSym} so it is unitarily symmetric.
\end{proof}

\begin{proposition} [Invariance under local unitaries] \label{localU}
For a bipartite quantum state $\rho_{AB}$, $E_{cqc}(\rho_{AB})$ is invariant under local unitary operations on $A$ and $B$
\end{proposition}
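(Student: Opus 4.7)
The plan is to exhibit a bijection between the set of unitarily symmetric extensions of $\rho_{AB}$ and the corresponding set for $\sigma_{AB} := (V_A \otimes V_B)\rho_{AB}(V_A \otimes V_B)^\dag$ which preserves correlated coherence evaluated in the respective eigenbases. Since the infima over the two sets then agree, $\ecqc{\sigma_{AB}} = \ecqc{\rho_{AB}}$ follows immediately.

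The bijection is the natural one: given an extension $\rho_{AA'BB'}$ of $\rho_{AB}$, set $\sigma_{AA'BB'} := \tilde V \rho_{AA'BB'} \tilde V^\dag$ with $\tilde V := V_A \otimes \openone_{A'} \otimes V_B \otimes \openone_{B'}$. A partial trace over $A'B'$ confirms this is an extension of $\sigma_{AB}$, and the inverse is conjugation by $\tilde V^\dag$. For the correlated coherence, the eigenbasis of $\sigma_{AA'}$ is the image of that of $\rho_{AA'}$ under $V_A \otimes \openone_{A'}$, and analogously for $BB'$. Since the $l_1$-norm of coherence depends only on the absolute values of the matrix elements, which are invariant under simultaneous conjugation of both the state and the reference basis by the same unitary, one has $\coh{\sigma_{AA'BB'}} = \coh{\rho_{AA'BB'}}$ in the relevant eigenbases; the local coherences on $AA'$ and $BB'$ vanish on both sides trivially, so $\cc{\sigma_{AA'BB'}} = \cc{\rho_{AA'BB'}}$.

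It remains to verify that the bijection sends unitarily symmetric extensions to unitarily symmetric ones. Given witnesses $(U_{AA'}, U_{BB'})$ for $\rho_{AA'BB'}$, the aim is to absorb $V_A$ and $V_B$ into new witnesses $(U'_{AA'}, U'_{BB'})$ for $\sigma_{AA'BB'}$. Writing the symmetry condition for $\sigma_{AA'BB'}$ in terms of $\rho_{AA'BB'}$ and $\tilde V$ and invoking the symmetry of $\rho_{AA'BB'}$ reduces the problem to producing $U'_{AA'}, U'_{BB'}$ such that $U'_{AA'} \otimes U'_{BB'} = \tilde V \, (U_{AA'} \otimes U_{BB'}) \, U_{\mathrm{swap}} \tilde V^\dag U_{\mathrm{swap}}^\dag$. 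Because $\tilde V$ factors as an operator on $AA'$ tensored with an operator on $BB'$, its conjugate by $U_{\mathrm{swap}}$ retains the same factored form (with the $A$-acting and $B$-acting pieces moved to the opposite side), so $U'_{AA'}$ and $U'_{BB'}$ can be read off as genuine local unitaries; concretely, once the swap identifies $A \leftrightarrow B$ and $A' \leftrightarrow B'$, one obtains expressions of the form $U'_{AA'} = (V_A \otimes \openone_{A'}) U_{AA'} (V_B \otimes \openone_{A'})^\dag$ and its $BB'$ analogue. I expect this last bookkeeping to be the main obstacle, since it is the only step where the non-trivial interplay between the local unitaries and the swap operator enters; the rest of the argument is a straightforward covariance check.
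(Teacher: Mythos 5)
Your proposal is correct, and its core is the same covariance argument as the paper's: transport an extension by $\tilde V = V_A\otimes\openone_{A'}\otimes V_B\otimes\openone_{B'}$, note that the eigenbases of the $AA'$ and $BB'$ marginals transform by the same unitary, so every matrix element in the local eigenbases --- and hence $\cqc{\cdot}$ --- is unchanged. Where you genuinely go beyond the paper is the step you flagged as the main obstacle: the paper's proof conjugates the optimal extension $\rho^*_{AA'BB'}$, observes the off-diagonal elements are invariant, and stops; it never verifies that the conjugated state is still a \emph{unitarily symmetric} extension, which matters because the minimization defining $\ecqc{\cdot}$ runs only over such extensions. Your witness construction supplies exactly this missing piece, and it checks out: writing $\tilde V' \coloneqq U_{\mathrm{swap}}\tilde V U_{\mathrm{swap}}^\dag$ and $W \coloneqq \tilde V\left(U_{AA'}\otimes U_{BB'}\right)\tilde V'^{\dag}$, one gets $W\, U_{\mathrm{swap}}\,\sigma_{AA'BB'}\,U_{\mathrm{swap}}^\dag\, W^\dag = \sigma_{AA'BB'}$, and $W$ is local across the $AA'|BB'$ cut because conjugation by the swap preserves the product structure of $\tilde V$. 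Your bijection framing also buys two small but real improvements: equality of the infima follows directly without assuming an optimal extension exists (the paper's ``let $\rho^*$ be the optimal extension'' presumes attainment), and the reverse inequality, which the paper leaves implicit, is automatic. Two pedantic caveats, neither fatal: your factor-wise formula $U'_{AA'} = \left(V_A\otimes\openone_{A'}\right)U_{AA'}\left(V_B\otimes\openone_{A'}\right)^\dag$ presumes the swap is defined via product bases identifying $A\leftrightarrow B$ and $A'\leftrightarrow A'$-partner factor-wise; this is harmless, since redefining $U_{\mathrm{swap}}$ with different bases changes it only by local unitaries that the witnesses absorb, and all your argument actually needs is that $U_{\mathrm{swap}}\tilde V^\dag U_{\mathrm{swap}}^\dag$ remains a product across the cut. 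Likewise, when the marginals are degenerate the eigenbases are not unique, but your bijection maps eigenbasis choices for $\rho_{AA'}$ onto those for $\sigma_{AA'}$ choice for choice, so the identity $\cqc{\sigma_{AA'BB'}} = \cqc{\rho_{AA'BB'}}$ survives.
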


\begin{proof}
Without loss in generality, we only need to prove it is invariant under local unitary operations of $A$.

For some bipartite state $\rho_{AB}$, let $\rho^*_{AA'BB'} $ be the optimal unitarily symmetric extension such that $ \ecqc{\rho_{AB}}= \cqc {\rho^*_{AA'BB'}}$. Let ${\ket{i}_{AA'}}$ and ${\ket{j}_{BB'}}$ be the eigenbases of $\rho^*_{AA'}$ and $\rho^*_{BB'}$ respectively. With respect to these bases, $\rho^*_{AA'BB'} =  \sum_{ijkl}\rho_{ijkl} \ket{i,j}_{AA'BB'}\bra{k,l}$

Suppose we perform a unitary $U = U_A\otimes \openone_{A'BB'}$ on $A$ such that so $U \ket{i,j} = \ket{\alpha_i, j}$ where $\{\ket{\alpha_i}\}$ is an orthonormal set. Since $U \rho^*_{AA'BB'} U^\dag = \sum_{ijkl}\rho_{ijkl} \ket{\alpha_i ,j}_{AA'BB'}\bra{\alpha_k,l}$, it is clear that the off diagonal matrix elements are invariant under the new bases $\ket{\alpha_i, j}_{AA'BB'}$ so $ \ecqc{\rho_{AB}} = \ecqc {U \rho^*_{AA'BB'} U^\dag }$, which proves the proposition.
\end{proof}

\begin{proposition} [Convexity]
$\ecqc{\rho_{AB}}$ is convex and decreases under mixing:

$$
\lambda \ecqc{\rho_{AB}} + (1-\lambda)\ecqc{\sigma_{AB}} \geq \ecqc{\lambda \rho_{AB} + (1-\lambda) \sigma_{AB}}
$$

For any 2 bipartite quantum states $\rho_{AB}$ and $\sigma_{AB}$, and $\lambda \in [0,1]$.
\end{proposition}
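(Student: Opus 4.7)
The plan is to upper-bound $\ecqc{\lambda \rho_{AB} + (1-\lambda)\sigma_{AB}}$ by producing an explicit unitarily symmetric extension of the mixture whose correlated coherence equals $\lambda \ecqc{\rho_{AB}} + (1-\lambda)\ecqc{\sigma_{AB}}$. The natural construction is the standard ``flagged'' mixture. Let $\rho^*_{AA'BB'}$ and $\sigma^*_{AA'BB'}$ achieve the minima defining $\ecqc{\rho_{AB}}$ and $\ecqc{\sigma_{AB}}$. Introduce classical flag systems $A''$ and $B''$ (which we absorb into $A'$ and $B'$ in the final extension) and define
\begin{equation*}
\tau_{AA'A''BB'B''} \coloneqq \lambda\, \rho^*_{AA'BB'}\otimes \ket{0}_{A''}\bra{0}\otimes\ket{0}_{B''}\bra{0} + (1-\lambda)\,\sigma^*_{AA'BB'}\otimes \ket{1}_{A''}\bra{1}\otimes\ket{1}_{B''}\bra{1}.
\end{equation*}
Tracing out $A'A''B'B''$ recovers $\lambda \rho_{AB} + (1-\lambda)\sigma_{AB}$, so $\tau$ is a legitimate extension.

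Next I would verify unitary symmetry. Because the flags $A''$ and $B''$ are perfectly correlated (both $\ket{00}$ or both $\ket{11}$) and the swap acts as $\ket{00}\leftrightarrow\ket{00}$ and $\ket{11}\leftrightarrow\ket{11}$, the swap preserves the flag structure. Within each flag sector, the symmetry of $\rho^*$ and $\sigma^*$ supplies local unitaries $V_{AA'},V_{BB'}$ and $W_{AA'},W_{BB'}$ that restore the state. Combining these controlled on the flag, i.e.\ $U_{AA'A''}\coloneqq V_{AA'}\otimes\ket{0}\bra{0}_{A''}+W_{AA'}\otimes\ket{1}\bra{1}_{A''}$ and analogously for $B$, gives the required local unitaries that undo the swap on $\tau$.

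Then I would compute $\cqc{\tau_{AA'A''BB'B''}}$ with respect to the eigenbases of its marginals. The marginal on $AA'A''$ is the block-diagonal state $\lambda\rho^*_{AA'}\otimes\ket{0}\bra{0}+(1-\lambda)\sigma^*_{AA'}\otimes\ket{1}\bra{1}$, whose eigenbasis is the union of (eigenvectors of $\rho^*_{AA'}$)$\otimes\ket{0}$ and (eigenvectors of $\sigma^*_{AA'}$)$\otimes\ket{1}$, and similarly for $BB'B''$. In this basis the local marginals are diagonal (so their local coherences vanish) and the two flagged summands live in mutually orthogonal sectors of the joint basis, so the $l_1$-coherence splits additively:
\begin{equation*}
\cqc{\tau_{AA'A''BB'B''}} = \lambda\,\cqc{\rho^*_{AA'BB'}} + (1-\lambda)\,\cqc{\sigma^*_{AA'BB'}} = \lambda\,\ecqc{\rho_{AB}} + (1-\lambda)\,\ecqc{\sigma_{AB}}.
\end{equation*}
Since $\tau$ is an admissible unitarily symmetric extension of the mixture, $\ecqc{\lambda\rho_{AB}+(1-\lambda)\sigma_{AB}}$ is bounded by this quantity, yielding the convexity inequality.

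The main obstacle I anticipate is the verification of unitary symmetry of the flagged extension, because the definition requires a single local unitary pair acting on the entire $AA'A''$ and $BB'B''$ factors rather than separate unitaries within each flag sector; handling this by upgrading the individual symmetry unitaries of $\rho^*$ and $\sigma^*$ to flag-controlled unitaries (as indicated above) is what makes the argument go through. The rest — checking that $\tau$ has the correct marginal, and that the orthogonality of flag sectors implies additivity of the $l_1$-coherence and vanishing of local coherences in the eigenbasis — is essentially bookkeeping.
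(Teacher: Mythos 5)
Your proposal is correct and follows essentially the same route as the paper: the flagged mixture $\tau_{AA'A''BB'B''}$ with perfectly correlated classical flags, the flag-controlled local unitaries (the paper's $W_{AA'A''} = U_{AA'}\otimes\ket{0}_{A''}\bra{0} + V_{AA'}\otimes\ket{1}_{A''}\bra{1}$) to establish unitary symmetry, additivity of the $l_1$-coherence over the orthogonal flag sectors, and the minimality of $E_{cqc}$ over unitarily symmetric extensions to conclude. The obstacle you flagged — upgrading the sector-wise symmetry unitaries to a single pair of flag-controlled local unitaries — is precisely how the paper resolves it.
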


\begin{proof}
Let $\rho^*_{AA'BB'}$ and $\sigma^*_{AA'BB'}$ be the optimal unitarily symmetric extensions for $\rho_{AB}$ and $\sigma_{AB}$ respectively such that $\ecqc{\rho_{AB}}= \cqc {\rho^*_{AA'BB'}}$ and  $\ecqc{\sigma_{AB}}= \cqc {\sigma^*_{AA'BB'}}$.

Consider the state $\tau_{AA'A''BB'B''} \coloneqq \lambda \rho^*_{AA'BB'}\otimes \ket{0,0}_{A''B''}\bra{0,0}+ (1-\lambda) \sigma^*_{AA'BB'}\otimes \ket{1,1}_{A''B''}\bra{1,1}$ for $\lambda \in [0,1]$. Direct computation will verify that with respect to the eigenbases of $\tau_{AA'A''}$ and $\tau_{BB'B''}$, $\cqc{\tau_{AA'A''BB'B''}} = \lambda \cqc{\rho^*_{AA'BB'}} + (1-\lambda) \cqc{\sigma^*_{AA'BB'}} = \lambda \ecqc{\rho_{AB}} + (1-\lambda)\ecqc{\sigma_{AB}}$. However, as $\mathrm{Tr}_{A'A''B'B''}(\tau_{AA'A''BB'B''}) = \lambda \rho_{AB} + (1-\lambda) \sigma_{AB}$, it is an extension of $\lambda \rho^*_{AB} + (1-\lambda) \sigma^*_{AB}$. 

It remains to be proven that the extension above is also unitarily symmetric. Let $\Xi^{\mathrm{swap}}_{ X \leftrightarrow Y}$ denote the swap operation between $X$ and $Y$. Let  the operators $U_{AA'}$, $U_{BB'}$, $V_{AA'}$, $V_{BB'}$ satisfy $\rho^*_{AA'BB'} = U_{AA'}\otimes U_{BB'} \Xi^{\mathrm{swap}}_{ AA' \leftrightarrow BB'}(\rho_{AA'BB'}^*)U_{AA'}^\dag\otimes U_{BB'}^\dag $ and $\sigma^*_{AA'BB'} = V_{AA'}\otimes V_{BB'} \Xi^{\mathrm{swap}}_{ AA' \leftrightarrow BB'}(\sigma_{AA'BB'}^*)V_{AA'}^\dag\otimes V_{BB'}^\dag $ respectively. It can be verified that the local unitary operators $W_{AA'A''} \coloneqq U_{AA'}\otimes \ket{0}_{A''}\bra{0} + V_{AA'}\otimes \ket{1}_{A''}\bra{1}$ and $W_{BB'B''} \coloneqq U_{BB'}\otimes \ket{0}_{B''}\bra{0} + V_{BB'}\otimes \ket{1}_{B''}\bra{1}$ satisfies $\tau^*_{AA'A''BB'B''} = W_{AA'A''}\otimes W_{BB'B''} \Xi^{\mathrm{swap}}_{ AA'A'' \leftrightarrow BB'B''}(\tau_{AA'A''BB'B''})W_{AA'B''}^\dag\otimes W_{BB'B''}^\dag $, so it is also unitarily symmetric.

Since $E_{cqc}$ is a minimization over all unitarily symmetric extensions, we have $ \lambda \ecqc{\rho_{AB}} + (1-\lambda)\ecqc{\sigma_{AB}} = \cqc{\tau_{AA'A''BB'B''}} \geq \ecqc{\lambda \rho_{AB} + (1-\lambda) \sigma_{AB}}$ which completes the proof.
\end{proof}

\begin{proposition} [Contraction under partial trace] \label{partTrace}
Consider the bipartite state $\rho_{AB}$ where $A=A_1A_2$ is a composite system. Then the entanglement of coherence is non-increasing under a partial trace:

$$
\ecqc{\rho_{A_1A_2B}} \geq \ecqc{\mathrm{Tr}_{A_1}(\rho_{A_1A_2B})}
$$

\end{proposition}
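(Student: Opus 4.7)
The plan is to show that any unitarily symmetric extension of $\rho_{A_1A_2B}$ also serves, after a relabeling that absorbs $A_1$ into the ancilla, as a unitarily symmetric extension of $\mathrm{Tr}_{A_1}(\rho_{A_1A_2B})$ with exactly the same correlated coherence. The desired inequality then follows from the fact that $E_{cqc}$ is defined as a minimum over such extensions.

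Concretely, I would let $\rho^*_{A_1A_2A'BB'}$ be an optimal unitarily symmetric extension of $\rho_{A_1A_2B}$ achieving
\[
\ecqc{\rho_{A_1A_2B}} = \cqc{\rho^*_{A_1A_2A'BB'}}.
\]
Setting $\sigma_{A_2B} := \mathrm{Tr}_{A_1}(\rho_{A_1A_2B})$ and $\tilde{A}' := A_1A'$, I would reinterpret the very same physical state as $\rho^*_{A_2\tilde{A}'BB'}$, now viewed as an extension of $\sigma_{A_2B}$ with Alice's subsystem being $A_2$ and her ancilla being $\tilde{A}'$. That this is an extension is immediate:
\[
\mathrm{Tr}_{\tilde{A}'B'}(\rho^*_{A_2\tilde{A}'BB'}) = \mathrm{Tr}_{A_1A'B'}(\rho^*_{A_1A_2A'BB'}) = \mathrm{Tr}_{A_1}(\rho_{A_1A_2B}) = \sigma_{A_2B}.
\]

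Two verifications complete the argument. First, the correlated coherence is unchanged under the relabeling, because the composite Alice system $A_1A_2A' = A_2\tilde{A}'$ and the composite Bob system $BB'$ refer to the same quantum subsystems in both pictures; hence the reduced states $\rho^*_{A_1A_2A'}$ and $\rho^*_{BB'}$, their eigenbases, and every term appearing in $\cqc{\cdot}$ are identical. Second, unitary symmetry carries over, since the swap in Definition~\ref{unitSym} and the witnessing local unitaries $U_{A_1A_2A'}$, $U_{BB'}$ depend only on the bipartition into Alice's combined system and Bob's combined system, which is not affected by moving $A_1$ from ``system'' to ``ancilla.'' Consequently, $\rho^*_{A_2\tilde{A}'BB'}$ is a unitarily symmetric extension of $\sigma_{A_2B}$ satisfying $\cqc{\rho^*_{A_2\tilde{A}'BB'}} = \ecqc{\rho_{A_1A_2B}}$, and the defining infimum gives $\ecqc{\sigma_{A_2B}} \leq \ecqc{\rho_{A_1A_2B}}$.

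The argument is essentially bookkeeping, so the main obstacle is just keeping conceptual clarity about what a ``unitarily symmetric extension'' consists of. The key observation making it work is that Definition~\ref{unitSym} places no constraint on the dimensionality or structure of the ancilla, so the class of extensions over which $E_{cqc}(\sigma_{A_2B})$ is minimized is large enough to absorb the extra tensor factor $A_1$ on Alice's side without leaving the class of admissible extensions.
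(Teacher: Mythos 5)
Your proposal is correct and matches the paper's own proof exactly: the paper likewise takes the optimal unitarily symmetric extension $\rho^*_{A_1A_2A'BB'}$, observes that tracing out $A_1A'B'$ recovers $\rho_{A_2B}$ so the same state serves as a unitarily symmetric extension of $\mathrm{Tr}_{A_1}(\rho_{A_1A_2B})$ with $A_1$ absorbed into Alice's ancilla, and invokes the defining minimization. Your write-up is in fact more careful than the paper's, since you explicitly verify that the relabeling preserves both the correlated coherence (the bipartition $A_1A_2A'$ versus $BB'$ and hence the relevant eigenbases are unchanged) and the unitary symmetry, points the paper asserts without comment.
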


\begin{proof}
Let $\rho^*_{A_1A_2A'BB'}$ be the optimal unitarily symmetric extension of $\rho_{A_1A_2B}$ such that $\ecqc{\rho_{A_1A_2B}} = \cqc{\rho^*_{A_1A_2A'BB'}}$. It is clear that $\mathrm{Tr}_{A_1A'B'}(\rho^*_{A_1A_2A'BB'}) = \mathrm{Tr}_{A_1}(\rho_{A_1A_2B}) = \rho_{A_2B}$ so $\rho^*_{A_1A_2A'BB'}$ is an unitarily symmetric extension of $ \mathrm{Tr}_{A_1}(\rho_{A_1A_2B})$. Since $E_{cqc}$ is a minimization over all such extensions, $\ecqc{\rho_{A_1A_2B}} \geq \ecqc{\mathrm{Tr}_{A_1}(\rho_{A_1A_2B})}$.
\end{proof}

\begin{proposition} [Contraction under local projections]

Let $\pi^i_A$ be a complete set of rank 1 projectors on subsystem $A$ such that $\sum_i \pi^i_A = \openone_A$, and define  the local projection $\Pi_A(\rho_A) \coloneqq \sum_i \pi^i_A \rho_A \pi^i_A$ The entanglement of coherence is contractive under a local projections:

$$
\ecqc{\rho_{AB}} \geq \ecqc{\Pi_A(\rho_{AB})}
$$

Or, if $A$ is a composite system, $A=A_1A_2$

$$
\ecqc{\rho_{AB}} \geq \ecqc{\Pi_{A_1}(\rho_{A_1A_2B})}
$$

\end{proposition}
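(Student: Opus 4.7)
The plan is to factor the map $\Pi_{A_1}$ through a Stinespring dilation into a local unitary plus a partial trace, so that the invariance of $E_{cqc}$ under local unitaries (Prop.~\ref{localU}) and its contraction under a partial trace (Prop.~\ref{partTrace}) do most of the work. The one ingredient not already in hand is the invariance of $E_{cqc}$ under attaching a pure local ancilla on the $A$-side, which I would establish as a sub-claim. The simpler first inequality (rank-$1$ projectors on all of $A$) is either a special case with $A_2$ trivial or, more directly, follows from $\Pi_A(\rho_{AB})=\sum_i \pi^i_A\otimes\sigma^i_B$ being classical-quantum and hence separable, giving $\ecqc{\Pi_A(\rho_{AB})}=0$ by Prop.~\ref{separableEcqc}.

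For the composite case, I would introduce an ancilla $A''$ with dimension equal to the number of outcomes and the unitary $\tilde W$ on $A_1A''$ determined by $\tilde W(\ket{\psi}_{A_1}\otimes\ket{0}_{A''})=\sum_i(\pi^i_{A_1}\ket{\psi})\otimes\ket{i}_{A''}$. Setting $\tilde\rho_{A_1A_2A''B}\coloneqq \tilde W(\rho_{A_1A_2B}\otimes\ket{0}_{A''}\bra{0})\tilde W^\dagger$, we have $\mathrm{Tr}_{A''}\tilde\rho_{A_1A_2A''B}=\Pi_{A_1}(\rho_{A_1A_2B})$. In the bipartition $A_1A_2A''|B$, Prop.~\ref{localU} gives $\ecqc{\tilde\rho_{A_1A_2A''B}}=\ecqc{\rho_{A_1A_2B}\otimes\ket{0}_{A''}\bra{0}}$, while Prop.~\ref{partTrace} gives $\ecqc{\tilde\rho_{A_1A_2A''B}}\geq \ecqc{\Pi_{A_1}(\rho_{A_1A_2B})}$. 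Chaining these reduces the proposition to the ancilla sub-claim $\ecqc{\rho_{A_1A_2B}\otimes\ket{0}_{A''}\bra{0}}\leq \ecqc{\rho_{A_1A_2B}}$.

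For the sub-claim I would take an optimal unitarily symmetric extension $\rho^*_{A_1A_2A'BB'}$ of $\rho_{A_1A_2B}$ and propose the candidate $\sigma\coloneqq\rho^*_{A_1A_2A'BB'}\otimes\ket{0}_{A''}\bra{0}\otimes\ket{0}_{B''}\bra{0}$ with $\dim B''=\dim A''$. Manifestly $\sigma$ is an extension of $\rho_{A_1A_2B}\otimes\ket{0}_{A''}\bra{0}$ (the $A$-side extension ancilla is $A'$ and the $B$-side extension ancilla is $B'B''$). Unitary symmetry under $A_1A_2A'A''\leftrightarrow BB'B''$ follows from the witnesses $U^*_{A_1A_2A'}$ and $U^*_{BB'}$ for $\rho^*$ tensored with the identities on $A''$ and $B''$, using that $\ket{0,0}_{A''B''}$ is invariant under swapping $A''$ and $B''$. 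Choosing bases on $A''$ and $B''$ that contain $\ket{0}_{A''}$ and $\ket{0}_{B''}$ respectively, every nonzero matrix element of $\sigma$ and of its reduced states on the $A$- and $B$-sides sits in the block with ancilla index $0$, which gives $\coh{\sigma}=\coh{\rho^*}$, $\coh{\sigma_{A_1A_2A'A''}}=\coh{\rho^*_{A_1A_2A'}}$ and $\coh{\sigma_{BB'B''}}=\coh{\rho^*_{BB'}}$, hence $\cqc{\sigma}=\cqc{\rho^*}=\ecqc{\rho_{A_1A_2B}}$.

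Chaining the three bounds then yields $\ecqc{\rho_{A_1A_2B}}\geq \ecqc{\rho_{A_1A_2B}\otimes\ket{0}_{A''}\bra{0}}=\ecqc{\tilde\rho_{A_1A_2A''B}}\geq \ecqc{\Pi_{A_1}(\rho_{A_1A_2B})}$, which is the claimed inequality. The main obstacle I anticipate is the bookkeeping inside the sub-claim: the reduced states of $\sigma$ are highly degenerate because the pure ancilla contributes a rank-one factor, so the eigenbases are non-unique, and one has to argue that taking a product completion of the eigenbasis of $\rho^*$ with any fixed basis of the ancilla is permissible for evaluating $\mathcal{C}_{cqc}$ so that the vanishing of off-diagonal elements outside the $\ket{0}$-block is manifest. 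Once that is dispatched, the rest is either quoted verbatim from earlier propositions or a direct verification of the symmetry witness.
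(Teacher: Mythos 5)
Your proposal is correct and follows essentially the same route as the paper: the paper likewise dilates the projection into a CNOT-type local unitary acting on an attached pure ancilla followed by a partial trace over that ancilla, and then invokes Props.~\ref{localU} and~\ref{partTrace} (your $\tilde W$ is exactly the paper's $U^{\mathrm{CNOT}}$ written in the eigenbasis of the rank-1 projectors). The only substantive difference is that you explicitly prove the ancilla-attachment sub-claim $\ecqc{\rho_{A_1A_2B}\otimes \ketbra{0}{0}} \leq \ecqc{\rho_{A_1A_2B}}$ --- including the symmetric $B''$ padding needed for unitary symmetry and the degenerate-eigenbasis bookkeeping --- whereas the paper simply asserts that ``adding an uncorrelated ancilla does not increase $E_{cqc}$,'' so your write-up fills a gap the paper leaves implicit.
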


\begin{proof}
First, we observe that any projective measurement can be performed via a CNOT type operation with an ancilla, followed by tracing out the ancilla:

$$\mathrm{Tr}_X\left(U^{\mathrm{CNOT}}_{XY}  \left( \ket{0}_X\bra{0} \otimes \sum_{i,j} \rho_{ij}\ket{i}_Y\bra{j} \right) (U^{\mathrm{CNOT}}_{XY})^\dag \right) = \sum_{i,i} \rho_{ii}\ket{i}_Y\bra{i}.$$

The unitary performs the operation $U^{\mathrm{CNOT}}_{XY} \ket{0,i}_{XY} = \ket{i,i}_{XY}$. Since adding an uncorrelated ancilla does not increase $E_{cqc}$ , we have $\ecqc{\ket{0}_{A_3}\bra{0} \otimes \rho_{A_1A_2B}} = \ecqc{\rho_{A_1A_2B}}$. As $E_{cqc}$ is invariant under local unitaries (Prop.~\ref{localU}) and contractive under partial trace (Prop.~\ref{partTrace}), this proves the proposition.
\end{proof}

\begin{proposition} [Invariance under classical communication] \label{classCom}
For a bipartite state $\rho_{AB}$, Suppose that on Alice's side, $A = A_1A_2$ is a composite system and $A_1$ is a classical registry storing classical information. Then $E_{eqc}$ remains invariant if a copy of $A_1$ is created on Bob's side. 

More formally, let $\rho_{A_1A_2B_2}= \sum_i p_i \ket{i}_{A_1} \bra{i} \otimes \ket{\psi_i}_{A_2B_2}\bra{\psi_i}$ be the initial state, and let $\sigma_{A_1A_2B_1B_2}= \sum_i p_i \ket{i}_{A_1} \bra{i} \otimes \ket{\psi_i}_{A_2B_2}\bra{\psi_i} \otimes \ket{i}_{B_1}\bra{i}$  be the state after Alice communicates a copy of $A_1$ to Bob, then

$$
\ecqc{\rho_{A_1A_2B_2}} = \ecqc{\sigma_{A_1A_2B_1B_2}}
$$
\end{proposition}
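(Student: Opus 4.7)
The plan is to prove both inequalities, one easy and one requiring an explicit construction. The inequality $\ecqc{\sigma_{A_1A_2B_1B_2}}\ge\ecqc{\rho_{A_1A_2B_2}}$ is immediate: since $\Tr_{B_1}(\sigma)=\rho$ and the $E_{cqc}$ construction is manifestly symmetric in $A\leftrightarrow B$, the preceding contraction-under-partial-trace proposition (Prop.~\ref{partTrace}) applies equally to a partial trace on Bob's side, giving $\ecqc{\sigma}\ge\ecqc{\rho}$.

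For the reverse inequality $\ecqc{\sigma}\le\ecqc{\rho}$, the approach is to start from an optimal unitarily symmetric extension $\rho^*_{A_1A_2A'B_2B'}$ of $\rho$ (so $\cqc{\rho^*}=\ecqc{\rho}$) and build an explicit extension of $\sigma$ by dephasing $\rho^*$ on Alice's classical register $A_1$ and adjoining a perfectly correlated classical copy on a new Bob register $B_1$:
\[
\sigma^*_{A_1A_2A'B_1B_2B'} := \sum_i \bigl(\ket{i}_{A_1}\bra{i}\otimes\openone\bigr)\,\rho^*\,\bigl(\ket{i}_{A_1}\bra{i}\otimes\openone\bigr)\otimes \ket{i}_{B_1}\bra{i}.
\]
Using $\bra{i}_{A_1}\rho\ket{i}_{A_1}=p_i\ket{\psi_i}\bra{\psi_i}$, a short computation shows $\Tr_{A'B'}(\sigma^*)=\sigma$, so $\sigma^*$ is a valid extension. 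Unitary symmetry of $\sigma^*$ then follows from combining the unitary symmetry of $\rho^*$ with the intrinsic $A_1\leftrightarrow B_1$ symmetry of the perfectly correlated classical copy, possibly after adjoining an auxiliary $\ket{i}_{A_1'}$ on Alice to equalize dimensions and using controlled-in-$i$ local unitaries of the form $\sum_i \ket{i}\bra{i}\otimes U_i$.

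The remaining step is to bound $\cqc{\sigma^*}\le\cqc{\rho^*}$. Both $\sigma^*_A$ and $\sigma^*_B$ are block-diagonal on $A_1$ and $B_1$ respectively, so their eigenbases factor as $\{\ket{i}_{A_1}\ket{\eta^i_\alpha}_{A_2A'}\}$ and $\{\ket{i}_{B_1}\ket{f^i_\beta}_{B_2B'}\}$. In these bases only the $A_1$-diagonal blocks $\gamma_i:=\bra{i}_{A_1}\rho^*\ket{i}_{A_1}$ of $\rho^*$ contribute to $\coh{\sigma^*}$, while $\coh{\rho^*}$, evaluated in the eigenbases of $\rho^*_A$ and $\rho^*_B$, additionally receives non-negative contributions from the off-$A_1$-diagonal blocks $\bra{i}_{A_1}\rho^*\ket{j}_{A_1}$ with $i\neq j$.

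The hard part will be matching the within-block terms: the Bob eigenbasis $\{\ket{f^i_\beta}\}$ used for $\sigma^*$ depends on $i$, whereas the Bob eigenbasis of $\rho^*_B=\sum_i\Tr_{A_2A'}(\gamma_i)$ is a single fixed basis of $B_2B'$. To finish the argument one would most likely first reduce to an optimal $\rho^*$ that is already block-diagonal in $A_1$ (natural since $\rho_{A_1}$ is classical; one needs to check that $A_1$-dephasing of $\rho^*$ preserves both unitary symmetry and the optimal value of $\cqc$), and then argue that the $l_1$ coherence of each $\gamma_i$ in its own Alice- and Bob-marginal eigenbases is not larger than its coherence in the fixed product basis used for $\coh{\rho^*}$.
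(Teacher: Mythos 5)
Your first direction is fine: an extension of $\sigma_{A_1A_2B_1B_2}$ is automatically an extension of $\rho_{A_1A_2B_2}$ (equivalently, your partial-trace argument on Bob's side via Prop.~\ref{partTrace}), which is exactly how the paper gets $\ecqc{\rho_{A_1A_2B_2}} \leq \ecqc{\sigma_{A_1A_2B_1B_2}}$. The hard direction, however, has two genuine gaps, and they are the entire substance of the claim. First, unitary symmetry of your $\sigma^*$ is not established: the dephasing $\sum_i (\ketbra{i}{i}_{A_1}\otimes\openone)\,\rho^*\,(\ketbra{i}{i}_{A_1}\otimes\openone)$ is a strictly one-sided operation, and nothing makes it commute with the swap symmetry of $\rho^*$; the swap carries the $A_1$-projectors to projectors on $B_2B'$, and matching the dephased $A_1$-blocks back to Alice's side would require each block $P_i\rho^* P_i$ to be unitarily symmetric on its own, which does not follow from symmetry of $\rho^*$. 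Your proposed repair --- pass to an optimal $\rho^*$ that is already $A_1$-block-diagonal --- is itself unproven and not innocuous: since $\mathcal{C}_{cqc}$ is evaluated in the eigenbases of the \emph{marginals}, and dephasing changes those marginals, $A_1$-dephasing is not an incoherent operation relative to the new reference bases and can in principle increase $\cqc{\cdot}$. Second, even granting block-diagonality, your final step demands bounding the $l_1$ coherence of each block $\gamma_i$ in its own $i$-dependent marginal eigenbases $\{\ket{f^i_\beta}\}$ by its contribution to $\coh{\rho^*}$ in the single fixed eigenbasis of $\rho^*_{BB'}$ --- precisely the kind of cross-basis $l_1$ comparison for which no monotonicity theorem exists; relatedly, your assertion that the eigenbasis of $\rho^*_{AA'}$ has $A_1$-block structure is unjustified, because the extension $\rho^*$ (unlike $\rho$) need not be classical on $A_1$.

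The paper's proof avoids all basis comparisons by using unitary symmetry \emph{operationally} rather than structurally: since $U_{\mathrm{swap}}$ acting on the optimal extension equals local unitaries acting on it, Bob can locally simulate access to $A_1$ --- perform the swap via local unitaries, copy the (now local) register into an ancilla $B''$ with a local CNOT, then undo the swap --- so the nonlocal copy $U^{\mathrm{CNOT}}_{A_1B''}$ is realized on $\rho^*\otimes\ketbra{0}{0}_{A''B''}$ by strictly local unitaries $V_{A_1A_2A'A''}\otimes V_{B_1B_2B'B''}$. Local unitaries leave $\cqc{\cdot}$ invariant (the argument of Prop.~\ref{localU}), and the resulting state is a unitarily symmetric extension of $\sigma_{A_1A_2B_1B_2}$, giving $\ecqc{\rho_{A_1A_2B_2}} \geq \ecqc{\sigma_{A_1A_2B_1B_2}}$ with no dephasing and no comparison of coherences across different reference bases. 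That maneuver --- exploiting the swap symmetry to turn classical communication into local unitaries on the extension --- is the missing idea; without it, your plan stalls exactly where you concede ``the hard part'' begins.
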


\begin{proof}
Let $\Xi^{\mathrm{swap}}_{ X \leftrightarrow Y}$ denote the swap operation between $X$ and $Y$. Let $\rho^*_{A_1A_2A'B_1B_2B'}$ be the optimal unitarily symmetric extension of $\rho_{A_1A_2B_2}$ such that $\ecqc{\rho_{A_1A_2B_2}} = \cqc{\rho^*_{A_1A_2A'B_1B_2B'}}$. Note that $\cqc{\rho^*_{A_1A_2A'B_2B_1B'}} = \cqc{ \ket{0}_{A''}\bra{0} \otimes \rho^*_{A_1A_2A'B_1B_2B'} \otimes \ket{0}_{B''}\bra{0}}$. Define a $\mathrm{CNOT}$ type operation between $A_1$ and $B''$ such that $U^{\mathrm{CNOT}}_{A_1B''} \ket{0,i}_{A_1B''} = \ket{i,i}_{A_1B''}$. Ordinarily, such an operation cannot be done by Bob locally unless he has access to subsystem $A_1$ on Alice's side. However, since $\rho^*_{A_1A_2A'BB'}$ is unitarily symmetric, there exists local unitaries $U_{A_1A_2A'}$ and $U_{BB'}$ such that $\Xi^{\mathrm{swap}}_{A_1A_2A' \leftrightarrow BB'} (\rho^*_{A_1A_2A'BB'}) = U_{A_1A_2A'} \otimes U_{BB'}\rho^*_{A_1A_2A'BB'} U^\dag_{A_1A_2A'} \otimes U^\dag_{BB'}$. This implies that the Bob can perform $U^{\mathrm{CNOT}}_{A_1B''}$ locally by first performing the swap operation through local unitaries, gain access to the information in $A_1$, copy it to $B''$ by performing $U^{\mathrm{CNOT}}_{B_1B''}$ locally, and then undo the swap operation via another set of local unitary operations. 

This means there must exist $V_{A_1A_2A'A''}$ and $V_{B_1B_2B'B''}$ such that:
 $$V_{A_1A_2A'A''} \otimes V_{B_1B_2B'B''} \left( \ket{0}_{A''}\bra{0} \otimes \rho^*_{A_1A_2A'B_1B_2B'} \otimes \ket{0}_{B''}\bra{0} \right) V_{A_1A_2A'A''}^\dag \otimes V_{B_1B_2B'B''}^\dag$$
 is  a unitarily symmetric extension of $U^{\mathrm{CNOT}}_{A_1B''} \left(  \rho_{A_1A_2B_2} \otimes \ket{0}_{B''}\bra{0} \right) U^{\mathrm{CNOT}}_{A_1B''}\, ^\dag$. However, because this state is equivalent to $\sigma_{A_1A_2B_1B_2}$ as defined previously, it is also a unitarily symmetric extension of $\sigma_{A_1A_2B_1B_2}$. Since $\mathrm{C}_{cqc}$ is invariant under local unitary operations, we have

\begin{align*}
 \ecqc{\rho_{A_1A_2B_2}} &= \ecqc{\rho^*_{A_1A_2A'BB'}} \\
 &= \ecqc{\ket{0}_{A''}\bra{0} \otimes \rho^*_{A_1A_2A'BB'} \otimes \ket{0}_{B''}\bra{0}} \\
&= \cqc{V_{A_1A_2A'A''} \otimes V_{BB'B''} \left( \ket{0}_{A''}\bra{0} \otimes \rho^*_{A_1A_2A'BB'} \otimes \ket{0}_{B''}\bra{0} \right) V_{A_1A_2A'A''}^\dag \otimes V_{BB'B''}^\dag} \\
&\geq \ecqc{\sigma_{A_1A_2B_1B_2}},
\end{align*} 
where the last inequality comes from the fact that the entanglement of coherence is a minimization over all unitarily symmetric extensions. On the other hand, a unitarily symmetric extension of $\sigma_{A_1A_2B_1B_2}$ is also an unitarily symmetric extension of $\rho_{A_1A_2B_2}$ so $\ecqc{\rho_{A_1A_2B_2}} \leq  \ecqc{\sigma_{A_1A_2B_1B_2}}$. This implies that $\rho_{A_1A_2B_2}$ so $\ecqc{\rho_{A_1A_2B_2}} =  \ecqc{\sigma_{A_1A_2B_1B_2}}$ which completes the proof.
\end{proof}

\begin{proposition} [Contraction under LOCC] \label{LOCC}
For any bipartite state $\rho_{AB}$ and let $\Lambda_{\mathrm{LOCC}}$ be any LOCC protocol performed between $A$ and $B$. Then $E_{cqc}$ is non-increasing under such operations: 

$$\ecqc{\rho_{AB}} \geq \ecqc{\Lambda_{\mathrm{LOCC}}(\rho_{AB})}$$ 
\end{proposition}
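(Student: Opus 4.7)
The plan is to express any LOCC channel as a finite composition of the elementary moves for which $E_{cqc}$ has already been shown to behave well: (i) appending a product ancilla in a pure reference state (used inside the proof of the preceding proposition on contraction under local projections), (ii) local unitaries, which leave $E_{cqc}$ invariant by Proposition \ref{localU}, (iii) rank-one local projective measurements on a composite subsystem, handled by the preceding proposition on contraction under local projections, (iv) local partial trace, handled by Proposition \ref{partTrace}, and (v) broadcasting a classical register from one party to the other, handled by Proposition \ref{classCom}.

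A general LOCC protocol proceeds in finitely many rounds: at each round one party performs a local quantum instrument, records the classical outcome in a local register, sends a copy of that register to the other party, who then performs an operation conditioned on the received label. I would implement each local instrument via the Stinespring-type dilation already employed in the proof of the contraction-under-local-projections proposition: adjoin a fresh ancilla that will hold the classical label, apply a local unitary coupling the system to the ancilla, and then read out the label by a rank-one projective measurement on the ancilla. Broadcasting the label to the other party is then covered by Proposition \ref{classCom}. Once the label lives on both sides, any classically-controlled local operation on the receiving party is simply an ordinary local operation on an enlarged composite subsystem that now contains the broadcast label, and so is again accounted for by the moves (i)--(iv). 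Any leftover ancillas and label registers are discarded by Proposition \ref{partTrace} at the end of the protocol.

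Composing these individually non-increasing or invariant moves round by round delivers the desired inequality $\ecqc{\rho_{AB}} \geq \ecqc{\Lambda_{\mathrm{LOCC}}(\rho_{AB})}$. The main obstacle I expect is the reduction of arbitrary local CPTP maps and instruments to the very restricted palette above. For finite-dimensional maps with an explicit Kraus decomposition the reduction is standard (Naimark/Stinespring dilation followed by a projective readout), but one must additionally invoke the previously established convexity of $E_{cqc}$ to pass from the selective post-measurement branches to the averaged deterministic channel that actually appears in the statement. A secondary delicate point, which this strategy does not address directly, is extending the bound to the full LOCC closure (protocols with unboundedly many rounds), since that would require a lower semicontinuity property of $E_{cqc}$ that is not supplied by the earlier propositions.
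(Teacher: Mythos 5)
Your proposal is correct and follows essentially the same route as the paper's proof: dilate each local instrument via a Naimark/Stinespring ancilla plus local unitary, turn the ancilla into a classical register by a rank-one local projection, broadcast the register using Proposition~\ref{classCom}, absorb the classically-conditioned operation into a local unitary on the enlarged subsystem, and discard ancillas via Proposition~\ref{partTrace}, so that each round composes invariant and contractive steps. One minor remark: the convexity invocation you anticipate is unnecessary, because the non-selective dephasing keeps the outcome label in the register and thus directly produces the averaged deterministic channel; and your caveat about unboundedly many rounds is a point the paper itself also leaves unaddressed.
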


\begin{proof}
We consider the scenario where Alice performs a POVM on her subsystems, communicates classical information of her meaurement outcomes to Bob, who then performs a separate operation on his subsystem based on this measurement information.

Suppose Alice and Bob begins with the state $\rho_{A_1B_1}$. By Naimark's theorem, any POVM can be performed through a unitary interaction between the state of interest and an uncorrelated pure state ancilla, followed by a projective measurement on the ancilla and finally tracing out the ancillary systems. In order to facilitate Alice and Bob's performing of such quantum operations, we add uncorrelated ancillas to the state, which does not change the entanglement of coherence so $\ecqc{\rho_{A_1B_1}}= \ecqc{\ket{0,0}_{M_AA_2}\bra{0,0} \otimes \rho_{A_1B_1}\otimes \ket{0,0}_{M_BB_2}\bra{0,0}}$. For Alice's procedure, we will assume the projection is performed on $M_A$, so $M_A$ is a classical register storing classical measurement outcomes.  

In the beginning, Alice performs a unitary operation on subsystems $M_AA_1A_2$, followed by a projection on $M_A$ which makes it classical. We represent the composite of these two operations with $\Omega_A$, which represents Alice's local operation. Since $E_{cqc}$ is invariant under local unitaries (Prop.~\ref{localU}) but contractive under a projection (Prop.~\ref{partTrace}), $\Omega_A$ is a contractive operation.

The next part of the procedure is a communication of classical bits to Bob. This procedure equivalent to the copying of the state of the classical register $M_A$ to the register $M_B$. However, $E_{cqc}$ is invariant under such communication (Prop.~\ref{classCom}). We represent this operation as $\Gamma_{A\rightarrow B}$.
The next step requires Bob to perform an operation on his quantum system based on the communicated bits. He can achieve this by performing a unitary operation on subsystems $M_BB_1B_2$. We represent this operation with $\Omega_B$, which does not change $E_{cqc}$. The final step of the procedure requires tracing out the ancillas, $\mathrm{Tr}_{M_AA_2M_BB_2}$, which is again contractive (Prop.~\ref{partTrace}). 

Since every step is either contractive or invariant, we have the following inequality:

\begin{align*}
\ecqc{\rho_{A_1B_1}} &= \ecqc{\ket{0,0}_{M_AA_2}\bra{0,0} \otimes \rho_{A_1B_1}\otimes \ket{0,0}_{M_BB_2}\bra{0,0}} \\
&\geq \ecqc{\mathrm{Tr}_{M_AA_2M_BB_2} \circ \Omega_B \circ\Gamma_{A\rightarrow B} \circ \Omega_A \left[\ket{0,0}_{M_AA_2}\bra{0,0} \otimes \rho_{A_1B_1}\otimes \ket{0,0}_{M_BB_2}\bra{0,0}\right]}.
\end{align*}
Any LOCC protocol is a series of such procedures from Alice to Bob or from Bob to Alice, so we must have $\ecqc{\rho_{AB}} \geq \ecqc{\Lambda_{\mathrm{LOCC}}(\rho_{AB})}$, which completes the proof.
\end{proof}

The following theorem shows that the entanglement of coherence is a valid measure of the entanglement of the system.

\begin{theorem} [Entanglement monotone] The entanglement of coherence $E_{cqc}$ is an entanglement monotone in the sense that it satisfies:

\begin{enumerate}
	\item $\ecqc{\rho_{AB}} = 0$ iff $\ecqc{\rho_{AB}}$ is seperable.
	\item $\ecqc{\rho_{AB}}$ is invariant under local unitaries on $A$ and $B$.
	\item $\ecqc{\rho_{AB}}\geq \ecqc{\Lambda_{\mathrm{LOCC}}(\rho_{AB})}$ for any LOCC procedure $\Lambda_{\mathrm{LOCC}}$.
\end{enumerate}
\end{theorem}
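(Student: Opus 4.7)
The strategy is to assemble the theorem from the propositions already established in this section; essentially no new technical work is required, only a careful bookkeeping of what has been proved and what still needs verification.

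Property 2 is immediate from Proposition~\ref{localU}, which asserts exactly that $\ecqc{\rho_{AB}}$ is invariant under local unitaries on $A$ and $B$. Property 3 is likewise immediate from Proposition~\ref{LOCC}, which states that $\ecqc{\rho_{AB}} \geq \ecqc{\Lambda_{\mathrm{LOCC}}(\rho_{AB})}$ for every LOCC protocol $\Lambda_{\mathrm{LOCC}}$. So the only genuine content is Property 1.

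For the forward direction of Property 1, I would argue as follows. If $\ecqc{\rho_{AB}} = 0$, then by definition there exists a unitarily symmetric extension $\rho_{AA'BB'}$ of $\rho_{AB}$ for which $\cqc{\rho_{AA'BB'}} = 0$ (with respect to the eigenbases of $\rho_{AA'}$ and $\rho_{BB'}$). Since the minimization in the definition of $E_{cqc}$ is taken over a restricted class of extensions (the unitarily symmetric ones), this value is an upper bound on the unrestricted minimum appearing in Theorem~\ref{separability}, so that minimum also vanishes. Hence Theorem~\ref{separability} applies and yields that $\rho_{AB}$ admits a decomposition of the form $\sum_{i} p_i \ket{\alpha_i}_A \bra{\alpha_i} \otimes \ket{\beta_i}_B \bra{\beta_i}$, i.e.\ $\rho_{AB}$ is separable. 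The converse direction is exactly the statement of Proposition~\ref{separableEcqc}: any separable state has $\ecqc{\rho_{AB}} = 0$.

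The only subtle point in this assembly, and the part I would want to state explicitly so the reader is not misled, is that Theorem~\ref{separability} was proved as a minimum over \emph{all} extensions, whereas $E_{cqc}$ is defined as a minimum over \emph{unitarily symmetric} extensions only. Both directions of Property 1 go through cleanly because the restriction cuts in compatible ways: for the forward direction, ``unitarily symmetric $\Rightarrow$ arbitrary'' lets us appeal to Theorem~\ref{separability}, while for the converse, Proposition~\ref{separableEcqc} already verifies that the extension exhibited in the proof of Theorem~\ref{separability} (the symmetric purification-style extension) is in fact unitarily symmetric. So the restriction is tight enough to preserve the separability characterization, and that is the main thing to check before concluding the theorem.
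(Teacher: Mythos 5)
Your proposal is correct and follows essentially the same route as the paper, whose proof is just the one-line citation of Propositions~\ref{separableEcqc}, \ref{localU}, and~\ref{LOCC}. If anything, your write-up is more complete than the paper's: Proposition~\ref{separableEcqc} only gives the ``separable $\Rightarrow \ecqc{\rho_{AB}}=0$'' direction of Property 1, and the converse requires exactly what you spell out --- that a vanishing minimum over the restricted class of unitarily symmetric extensions forces the unrestricted minimum of Theorem~\ref{separability} to vanish (using nonnegativity of $\mathcal{C}_{cqc}$), which then yields separability.
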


\begin{proof}
It follows directly from Prop.~\ref{separableEcqc} and Prop.~\ref{localU} and~\ref{LOCC}.
\end{proof}

\section{conclusion}
To conclude, we defined the Correlated Coherence of quantum states as the total coherence without local coherences, which can be interpreted as the portion of the coherence that is shared between 2 quantum subsystems.
The framework of the Correlated Coherence allows us to identify the same concepts of non-classicality of correlations as those of (both symmetric and asymmetric) quantum discord and quantum entanglement.
Finally, we proved that the minimization of the Correlated Coherence over all symmetric extensions of a quantum state is an entanglement monotone, showing that quantum entanglement may be interpreted as a specialized form of coherence. Our results suggest that quantum correlations in general may be understood from the viewpoint of coherence, thus possibly opening new ways of understanding  both.

\section*{acknowledgment}
This work was supported by the National Research Foundation of Korea (NRF) through a grant funded by the Korea government (MSIP) (Grant No. 2010-0018295).

\end{document}